\documentclass[format=acmsmall,review=false,nonacm]{acmart}
\usepackage{booktabs} %
\usepackage[ruled]{algorithm2e} %

\SetAlFnt{\small}
\SetAlCapFnt{\small}
\SetAlCapNameFnt{\small}
\SetAlCapHSkip{0pt}
\IncMargin{-\parindent}

\setcitestyle{acmnumeric}
\setcitestyle{numbers,sort&compress}

\title{An Entropy Potential for Type-Composition Games}

\author{Morteza Alimi}
\email{morteza.alimi@uni-a.de}
\orcid{0000-0003-3732-2903}
\affiliation{
    \institution{University of Augsburg}
    \city{Augsburg}
    \country{Germany}
}

\author{Merlin de la Haye}
\email{merlin.delahaye@hpi.de}
\orcid{0009-0000-9517-4609}
\affiliation{
    \institution{Hasso Plattner Institute, University of Potsdam}
    \city{Potsdam}
    \country{Germany}
}

\author{Pascal Lenzner}
\email{pascal.lenzner@uni-a.de}
\orcid{0000-0002-3010-1019}
\affiliation{
    \institution{University of Augsburg}
    \city{Augsburg}
    \country{Germany}
}

\author{\v{S}imon Schierreich}
\email{schiesim@fit.cvut.cz}
\orcid{0000-0001-8901-1942}
\affiliation{%
	\institution{AGH University}
	\city{Krakow}
	\country{Poland}
}
\affiliation{%
	\institution{Czech Technical University in Prague}
	\city{Prague}
	\country{Czechia}
}
    
\author{Alexander Skopalik}
\email{a.skopalik@utwente.nl}
\orcid{0000-0002-4950-8708}
\affiliation{
    \institution{University of Twente}
    \city{Twente}
    \country{The Netherlands}
}

\author{Marcus Wunderlich}
\email{marcus.wunderlich@uni-a.de}
\orcid{0009-0008-0519-5421}
\affiliation{
    \institution{University of Augsburg}
    \city{Augsburg}
    \country{Germany}
}

\renewcommand{\shortauthors}{M. Alimi, M. de la Haye, P. Lenzner, Š. Schierreich, A. Skopalik, and M. Wunderlich}

\keywords{Algorithmic Game Theory, Equilibrium Existence, Potential Functions, Hedonic Games, Resource Selection Games}

\begin{abstract}
    Potential functions are a key tool in theoretical computer science with applications ranging from the runtime analysis of algorithms and data structures, through the analysis of the expected behavior of random processes and search heuristics, to proving the existence of equilibrium states in strategic games. Typically, proofs that employ potential functions are short, elegant, and easy to verify, yet very powerful. Moreover, potential functions are essential ingredients for constructive proofs, in particular in algorithmic game theory. There, a key question is the existence of equilibrium states, but the most powerful theorem in the field -- Nash's theorem -- is unfortunately non-constructive. For many strategic games, potential functions come to the rescue by enabling constructive proofs that sometimes even yield efficient algorithms for finding equilibria.
    
    We add to this by providing a novel class of entropy-inspired log-multinomial potential functions for natural game-theoretic settings where rational agents of different types strategically choose actions to maximize their utility. In particular, we consider utility functions that are based on the fraction of same- and other-type agents taking the same action. We demonstrate the versatility of the new potential function class by presenting simple equilibrium existence proofs for two recent game-theoretic models, for which only involved technical proofs were previously known. Even better, the new potential function class yields efficient algorithms for constructing equilibria for much more general models. Thereby, we positively resolve several open problems.
\end{abstract}

\usepackage{mathtools}
\usepackage{amsthm}
\usepackage{thmtools}
\usepackage{nicefrac}

\usepackage{cleveref}

\usepackage{xspace}
\usepackage{titlecaps}
\Addlcwords{and with the a an}

\usepackage{enumerate}

\usepackage{todonotes}
\usepackage{tikz}
\usetikzlibrary{arrows.meta,positioning}

\newif\ifshortJournalName
\shortJournalNamefalse
\newif\ifshortConferenceName
\shortConferenceNamefalse

\newcommand{\gameName}[1]{\emph{\titlecap{#1}}\xspace}
\def\BnM{\textsc{BMG}}

\newcommand{\agents}{N}
\newcommand{\numAgents}{n}

\newcommand{\numTypes}{q}
\newcommand{\type}{T}
\newcommand{\agentType}[1]{t(#1)}

\newcommand{\actions}{A}
\newcommand{\allActions}{\mathcal{A}}
\newcommand{\profile}{\mathbf{s}}
\newcommand{\action}{s}
\newcommand{\agentsCount}{n}

\newcommand{\util}{u}

\newcommand{\bakers}{B}
\newcommand{\millers}{M}
\newcommand{\locations}{L}

\newcommand{\Oh}[1]{{\mathcal{O}\mathopen{}\left(#1\right)}}

\definecolor{myBlue}{HTML}{4664aa}
\definecolor{myRed}{HTML}{a22223}
\definecolor{myLightRed}{HTML}{c37070}

\newtheorem{remark}{Remark}

\begin{document}

\maketitle

\section{Introduction}\label{sec:intro}
In the analysis of game-theoretic settings, usually the first and most important question that researchers investigate is whether the game in question admits an equilibrium state. Such states are vectors of strategies selected by each player such that no player unilaterally wants to deviate to another strategy. Thus, equilibria are stable states of the game-theoretic system and thus serve as behavioral predictions. This has numerous applications in economics and the social sciences. The most commonly used equilibrium concept is the \emph{Nash equilibrium}, which is guaranteed to exist due to Nash's theorem~\cite{nash1950equilibrium}. However, the downside of Nash's proof is that it cannot be turned into an algorithm for finding equilibria, since it is non-constructive. Moreover, the existence guarantee only holds for the case where the players' strategies can randomize between different actions. 

For many applications, ranging from network creation, through coalition formation, job scheduling, and resource selection, to location decisions in a competitive market, it is more natural to assume that players select only a single action as their strategy, e.g., a specific location for opening a facility. Equilibria consisting only of such strategies are called \emph{pure Nash equilibria} (PNE), and it is well-known that PNE might not exist, i.e., Nash's theorem does not apply, and it is computationally hard to find~\cite{FabrikantPT2004}. However, there is a large class of strategic games, called \emph{potential games}, that does admit PNEs. The key feature of these games is that a \emph{generalized ordinal potential function}~\cite{MondererS1996} exists, which maps states of the game to values such that the potential value improves if a player can improve individually by changing their strategy. Then, the state that maximizes the potential value must be a PNE. Thus, the existence of PNEs can be proven by finding a suitable potential function. Moreover, this can be turned into a constructive algorithm by iteratively performing improving strategy changes of the players, e.g., in round-robin order, until a (local) maximum of the potential function is reached. This approach is called \emph{improving response dynamics} (IRD).

The idea of using such an algorithm to find PNEs works for all games in which every sequence of improving strategy changes for the players must be finite. This is called the \emph{finite improvement property} (FIP). The seminal work of Monderer and Shapley~\cite{MondererS1996} established that this is actually a characterization: every game having the FIP admits a generalized ordinal potential function. Thus, the constructive IRD algorithm is guaranteed to find a PNE iff such a potential function exists.  

One of the most prominent examples of a potential function in algorithmic game theory is the Rosenthal potential~\cite{Rosenthal1973}, which works for the broad class of congestion games. There, players strategically select resources, and their utility depends on the number of players on each shared resource. Most importantly, this dependence on the number of players, instead of the specific set of players on each resource, is critical for Rosenthal-style potentials. Thus, such approaches work for settings where the players are anonymous. 

In this paper, we consider \gameName{Type-Composition Games} which are close in spirit to congestion games, since also there players select resources (called \emph{actions} in our model) and their utility depends on the action choices of all players. However, we stray
from the beaten path by dropping the anonymity assumption, i.e., we consider players of different types, and the utility of a player depends not on the number but on the distribution of types of players selecting the same action. In particular, we focus on the fraction of same-type and other-type players.

For this broad setting of \gameName{Type-Composition Games}, we provide a generalized ordinal potential function and apply it to two previously studied games, which are special cases of our model. Thereby, we obtain clean and technically simple proofs of the existence of PNEs. We also derive efficient IRD algorithms to find them. Our results supersede previous results on these games, as we provide a much simpler PNE existence proof for cases where PNE existence was already known, and we prove PNE existence for cases where it was still open. Moreover, we obtain even stronger results by extending the existence of PNE to generalizations of these games. Last but not least, our approach directly yields that, in all these cases, PNEs can be found in polynomial time. %

\subsection{Related Work}

Potential functions are a fundamental tool across theoretical computer science. The \emph{potential method} used in \emph{amortized analysis}~\cite{Tarjan1985} has been successfully applied to provide tight analysis of self-adjusting data structures such as splay trees~\cite{SleatorT1985}, Fibonacci heaps~\cite{FredmanT1987}, and union-find~\cite{Tarjan1975}. Next, the \emph{drift analysis} for random processes such as evolutionary algorithms~\cite{Hajek1982,HeY2001,DoerrJW12,Witt13,Lengler2020} uses a bound on the expected per-step difference of a potential function to derive a bound on the expected convergence time. Another prominent example is \emph{algorithmic game theory}, where \gameName{potential games}~\cite{MondererS1996,FabrikantPT2004} are one of the few general classes for which pure Nash equilibria are guaranteed to exist and can be reached via improving response dynamics. This class of games includes many important models such as \gameName{congestion games}~\cite{Rosenthal1973,HarksKM2011,MILCHTAICH1996111}, \gameName{selfish routing}~\cite{RoughgardenT2002} \gameName{cost-sharing games}~\cite{AnshelevichDKTWR2008,BiloFM2020}, or \gameName{load balancing games}~\cite{EvenDarKM2007,FeldottoLMS2019}.

The \gameName{Bakers and Millers game} was introduced by Aziz \emph{et al.}~\cite{AzizBBHOP2019} as a special case of \gameName{(fractional) hedonic games}~\cite{DrezeG1980,BogomolnaiaJ2002,AzizBBHOP2019,BiloFFMM2018}. There, each agent is of one of two types, either a baker or a miller, and prefers to be in a coalition with as large a fraction of agents of the opposite type as possible. More recently, Krogmann \emph{et al.}~\cite{KrogmannLS2025} generalized this model by restricting each baker to an agent-specific set of allowed locations, while millers can still choose any location. They show that pure Nash equilibria continue to exist and can be computed in polynomial time. Importantly, they left whether a potential function exists for the game as an open problem.
Also related are the so-called \gameName{hedonic diversity games}~\cite{BredereckEI2019,BoehmerE2020}, where agents are also partitioned into two types, but this time, their preferences are arbitrary linear orders over the possible compositions of each coalition in terms of the present types. Ganian \emph{et al.}~\cite{GanianHKSS2023} generalized the model to multiple types and provided an extensive parameterized complexity analysis for several stability notions. 

A different generalization is the model of \gameName{strategic resource selection with homophilic agents}~\cite{GadeaHarderKLS2023}, where agents of two types pick resources on which at least a $\tau$-fraction of other agents share their type. The authors consider two variants, with impact-blind and impact-aware players, respectively. The latter players know the exact numbers of type distribution on the resources, while in the former case, only the fractions, but not the exact numbers, are known. For the impact-blind case, the social welfare serves as a potential function, and thus PNEs exist and can be found in $\Oh{n^5}$ many steps, where $n$ is the number of players. Interestingly, the impact-aware setting behaves very differently. There, the authors show that the game has the FIP if $\tau \leq \nicefrac{1}{2}$, proven via a vector-based function that is almost a potential. However, it is an open problem if PNE exist, and if the FIP holds for $\tau > \nicefrac{1}{2}$ too.  

The \gameName{bakers and millers game} can also be understood as the simultaneous move variant of two-stage facility location games~\cite{twostageKrogmannLMS21,twostageKrogmannLS23,twostageKrogmannLSUV24}, where agents of two types, facilities and clients, select locations in a given network in two stages: first, facility agents select a location, then clients select which facilities to patronize. There, some variants do not admit equilibria, while for others, equilibrium existence was shown via sophisticated potential function arguments. 

Inspired by Schelling's famous segregation model~\cite{Schelling1969,Schelling1971}, several game-theoretic refinements have been introduced~\cite{ChauhanLM2018,Echzell0LMPSSS2019,AgarwalEGISV2021}. The mainstream of research focuses on selfish agents located at the vertices of a graph and unilaterally seeking to deviate to vertices with a high fraction of same-type neighbors~\cite{BrandtIKK2012,BiloBLM2022b,BullingerSV2021,BiloBLM2022a,KanellopoulosKV2023,KreiselBFN2024,DeligkasEG2024}. However, recently, also models with single-peaked utilities~\cite{BiloBLM2022a,singlepeakedFriedrichLMS23}, diversity-~\cite{NarayananSV2025} and variety-seeking~\cite{NarayananOTV2025} utilities, and variants with continuous types~\cite{BiloBDLMS2023,continuousHLSW26} have been considered. Notably, in some cases, potential functions based on social welfare, the number of same-type neighbors, or vector-based functions using utilities have been identified. In some cases, the existence of a potential function depends on the graph topology and the value of the tolerance threshold $\tau$~\cite{Echzell0LMPSSS2019,BiloBLM2022b}. However, Agarwal \emph{et al.}~\cite{AgarwalEGISV2021} showed that there are variants that do not admit PNEs. Finding potential functions of many of these variants was repeatedly posed as an open question. 
Closely related is also the model of \gameName{refugee housing}~\cite{KnopS2023,Schierreich2024,Schierreich2023}, where there are two types of agents -- the inhabitants and the refugees -- and the goal is to allocate the refugees to the topology with the inhabitants pre-allocated so that the preferences of all agents are respected.

\subsection{Our Contribution}

We introduce \gameName{Type-Composition Games}, where players of different types select actions, and their utility depends on the fraction of same-type and other-type players that also take their chosen action. This covers a broad range of natural settings, since it contains the previously studied \gameName{bakers and millers games}  (BMG)~\cite{AzizBBHOP2019,KrogmannLS2025} and \gameName{resource selection games with homophilic agents} (RSG)~\cite{GadeaHarderKLS2023}, which both have multiple applications, as special cases. 

As our main result, we provide a novel type of potential function that utilizes log-multinomials and is inspired by the concept of entropy. To the best of our knowledge, such a potential has not yet been applied to game-theoretic settings (and we also do not know of any similar approaches in algorithm analysis), and thus this contribution might be of independent interest as it may be applied to other settings with similar features.  

We use the log-multinomial potential function to prove the FIP for the \gameName{type-composition game} with $q$ types, and we show that for any such game, improving response dynamics are guaranteed to converge in $\Oh{\numAgents^3\cdot |\allActions|\log \numTypes}$ many steps, where $\allActions$ is the set of possible actions of the players. Thus, for games with a polynomial-size description, we obtain an efficient algorithm for finding a PNE. 

We apply our general theorem to a known class of games. In particular, we show that the BMG is a special case of the \gameName{type-composition game} with two types. Thus, our potential function yields a much simpler PNE existence proof, compared to the original proof in~\cite{KrogmannLS2025}. Moreover, we obtain a much simpler, efficient algorithm for finding a PNE. Additionally, although the BMG in~\cite{KrogmannLS2025} has only one-sided location restrictions, our constructive proof also holds when both bakers and millers have location restrictions. This resolves an explicitly stated open problem by the authors.

Finally, we remark that our novel potential function yields a simpler PNE existence proof for 
\gameName{Resource Selection Games with Homophilic Agents}~\cite{GadeaHarderKLS2023}. This also allows resolving open questions regarding existence in the general case and yields a polynomial-time algorithm to compute a PNE.

\section{Type-Composition Games}\label{sec:diversityGames}

We introduce the class of \gameName{Type-Composition Games}, a framework that generalizes existing models such as \gameName{Bakers and Millers games}~\cite{AzizBBHOP2019,KrogmannLS2025} and \gameName{Resource Selection games with homophilic agents}~\cite{GadeaHarderKLS2023}. 

\paragraph{Players and Types.}
Let $\agents = \{1, \ldots, \numAgents\}$ be the set of \emph{players}, partitioned into $\numTypes$ \emph{types} $\type_1\, \cup \cdots  \cup \type_\numTypes$, that is, $\bigcup_{i=1}^\numTypes \type_i = \agents$ and $\type_i \cap \type_j = \emptyset$ for every $i\not= j$.
For convenience, we set $\agentType{i}=j$ if player $i$ belongs to type $\type_j$.

\paragraph{Actions and Strategies.}
Each player $i \in \agents$ selects an \emph{action} from a set $\actions_i$; we denote the set of all possible actions by $\allActions = \bigcup_{i \in \agents} \actions_i$. A \emph{strategy profile} is a vector $\profile = (\action_1, \ldots, \action_\numAgents) \in \actions_1 \times \cdots \times \actions_\numAgents$. For a given profile $\profile$, let $\agentsCount_\action^j(\profile)$ denote the number of players of type $\type_j$ choosing action $\action \in \allActions$. The total number of players choosing action $\action$ is then given by $\agentsCount_\action(\profile) = \sum_{j=1}^\numTypes \agentsCount_\action^j(\profile)$.%

\paragraph{Utilities and Equilibria.}
The utility of player $i$ in profile $\profile$ is denoted by $\util_i(\profile)$, and we will define the specifics of it later when discussing several variants. As is standard, we denote by $(\action^\prime_i,\profile_{-i})$ the strategy profile resulting from $\profile$ when player $i$ replaces their action $\action_i$ with $\action'_i$.
An action $\action_i \in \actions_i$ is a \emph{best response} of player $i\in\agents$ to a partial profile $\profile_{-i}$ if:
\[  \forall \action^\prime_i \in \actions_i\colon \quad
    \util_i(\action_i, \profile_{-i}) \geq \util_i(\action^\prime_i, \profile_{-i}).
\]
A strategy profile $\profile = (\action_1, \ldots, \action_n)$ is a \emph{pure Nash equilibrium (PNE)} if, for every player $i \in \agents$, action $\action_i$ is a best response to $\profile_{-i}$. Formally, profile~$\profile$ is a PNE if:
\[
\text{for all } i \in \agents \text{ and for all } \action^\prime_i \in \actions_i\colon \util_i(\action_i, \profile_{-i}) \geq \util_i(\action^\prime_i, \profile_{-i}).
\]

\subsection{Type-Composition Functions}\label{sec:utilities}

In \gameName{Type-Composition Games}, players seek to choose their actions to maximize or minimize the ratio of the number of same-type players. There are two natural ways to define this ratio:

The \emph{type-balance} of a player $i$ in profile $\profile$ will serve as the utility\footnote{For notational uniformity, we use the term utility for both maximizing and minimizing games, although for the latter the term cost is the more appropriate term. }, denoted by $\util_i(\profile)$. It is defined as the number of different-typed players choosing the same action divided by the number of same-typed players
, i.e., 
\[
    \util_i(\profile) = 
        \frac{\sum_{j\ne \agentType{i}} \agentsCount^j_{\action_i}(\profile)}
        {\agentsCount^{\agentType{i}}_{\action_i}(\profile)} 
        = 
        \frac{\agentsCount_{\action_i}(\profile) - \agentsCount^{\agentType{i}}_{\action_i}(\profile)}{\agentsCount^{\agentType{i}}_{\action_i}(\profile)}.
\]
Alternatively, one can define the type-balance as the total number of players who choose that action divided by the number of players of the same type who choose the same action, i.e.,%
\[
    \util^\prime_i(\profile) = 
        \frac{\agentsCount_{\action_i}(\profile)}{\agentsCount^{\agentType{i}}_{\action_i}(\profile)}.   
\]
Note that all used fractions are always well-defined, as at least the player whose type-balance/utility is calculated chooses the considered action.

\subsection{Potential Functions}
Given a strategy profile $\profile$, an \emph{improving move} of a maximizing\footnote{The case of minimizing players is analogous: improving moves decrease $u_i$ and $\phi$.} player~$i \in \agents$ is a change from action~$\action_i$ to some other action $\action_i^\prime$ such that
\[
    \util_i(\action_i,\profile_{-i}) < \util_i(\action_i^\prime,\profile_{-i})\,.
\]
A sequence of strategy profiles $\profile^1,\ldots,\profile^k$ is called an \emph{improvement path} if, for every $t \in \{1,\ldots,k-1\}$, there is a player~$i \in \agents$ such that strategy profile $\profile^{t+1}$ emerges from the previous strategy profile $\profile^{t}$ by an improving move by player~$i$. A game has the \emph{finite improvement property (FIP)}, if all improvement paths have finite length. Monderer and Shapley~\cite{MondererS1996} proved that having the FIP is equivalent to a game admitting a \emph{generalized ordinal potential function}. This is a function $\phi$ that maps strategy profiles to real numbers such that every improving move of some player~$i$ from action $\action_i$ to action~$\action_i'$ increases the value of $\phi$, i.e.,
\[
    \util_i(\action_i,\profile_{-i}) < \util_i(\action_i^\prime,\profile_{-i}) \quad\implies\quad \phi(\action_i,\profile_{-i}) < \phi(\action_i^\prime,\profile_{-i})
\]
holds for all strategy profiles $\profile$ and all improving moves from $\action_i$ to $\action_i'$. %

\section{The Novel Entropy-Inspired Log-Multinomial Potential}\label{sec:potential}%

As our main contribution, we show in \Cref{thm:potentialFunction} that all \gameName{type-composition games} have the finite improvement property by proving that
\[
    \Phi(\profile)
    \coloneqq
    \sum_{\action \in \allActions} \log\binom{\agentsCount_\action}{\agentsCount_\action^1,\ldots,\agentsCount_\action^\numTypes}
    =
    \sum_{\action\in \allActions}
    \log\!\left(
        \frac{\agentsCount_\action!}{\prod_{j=1}^{\numTypes} \agentsCount^j_\action(\profile)!}
    \right)
\]
is a generalized ordinal potential function, where $\binom{\agentsCount_\action}{\agentsCount_\action^1,\ldots,\agentsCount_\action^\numTypes}$ is the multinomial coefficient. 

\begin{remark}[Entropy interpretation]
    By Stirling's approximation, we have $\log\binom{\agentsCount_\action}{\agentsCount_\action^1, \ldots, \agentsCount_\action^\numTypes} = \agentsCount_\action \cdot H(\agentsCount_\action^1/\agentsCount_\action, \ldots, \agentsCount_\action^\numTypes/\agentsCount_\action) + \Oh{\numTypes \log \agentsCount_\action}$, where $H$ denotes the Shannon entropy. Thus, up to lower-order terms, $\Phi$ is a weighted aggregation of the entropy of the type distribution at each action.%
\end{remark}

Now, we prove our main theorem.

\begin{theorem}\label{thm:potentialFunction}
    $\Phi$
is a generalized ordinal potential function for all \gameName{type-composition games}.
\end{theorem}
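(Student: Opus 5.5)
The plan is to compute exactly how $\Phi$ changes under a single unilateral move, and to show that this change equals the change in the logarithm of the mover's type-balance $\util^\prime_i$. Since $\util_i = \util^\prime_i - 1$ is a strictly increasing transformation of $\util^\prime_i$, and $\log$ is strictly increasing, every improving move in either utility variant then strictly increases $\Phi$. (For minimizing players, every improving move strictly decreases $\Phi$, which is the analogous statement.) In fact this will show that $\log \util^\prime_i$ admits $\Phi$ as an \emph{exact} potential.

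Fix a profile $\profile$ and a player $i$ of type $j=\agentType{i}$ moving from action $a=\action_i$ to $b=\action^\prime_i\neq a$. Only the summands of $\Phi$ for $a$ and $b$ change, so I treat them separately; all counts below are taken in $\profile$.

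\textbf{Summand for $a$.} Here $\agentsCount_a$ drops by one and $\agentsCount^j_a$ drops by one, while the other type counts are unchanged. The multinomial coefficient is therefore multiplied by
\[
\frac{(\agentsCount_a-1)!\,\agentsCount^j_a!}{\agentsCount_a!\,(\agentsCount^j_a-1)!}=\frac{\agentsCount^j_a}{\agentsCount_a}.
\]
Its logarithm changes by $-\log(\agentsCount_a/\agentsCount^j_a)=-\log \util^\prime_i(\profile)$. This is well defined because $i$ itself is counted at $a$, so $\agentsCount^j_a\ge 1$.

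\textbf{Summand for $b$.} Symmetrically, $\agentsCount_b$ and $\agentsCount^j_b$ each increase by one. The logarithm of the coefficient changes by $\log\bigl((\agentsCount_b+1)/(\agentsCount^j_b+1)\bigr)$, which is exactly $\log \util^\prime_i(b,\profile_{-i})$, since after the move $i$ is counted at $b$.

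\textbf{Combining.} Adding the two contributions gives
\[
\Phi(b,\profile_{-i})-\Phi(\profile)=\log \util^\prime_i(b,\profile_{-i})-\log \util^\prime_i(a,\profile_{-i}).
\]
Hence $\util^\prime_i$ increases if and only if $\Phi$ increases. Because $\util_i=\util^\prime_i-1$, the same equivalence holds for $\util_i$, which establishes the generalized ordinal potential property for both definitions of type-balance.

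There is no real obstacle here. The only step requiring care is the bookkeeping: the utility after the move must include $i$ at $b$, and the one before must include $i$ at $a$, so that the factorial ratios telescope exactly to the utilities. I would also note that the argument extends to any utility that is a strictly monotone function of $\util^\prime_i$.
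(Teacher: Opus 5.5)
Your proposal is correct and is essentially the paper's argument. The paper proves exactly the identity $\Phi(\action^\prime_i,\profile_{-i})-\Phi(\profile)=\log \util^\prime_i(\action^\prime_i,\profile_{-i})-\log \util^\prime_i(\profile)$ as \Cref{lemma:potentialDiffIsUtilityDiff}, by canceling factorials in the two affected summands; it combines them into one fraction where you treat each summand separately, which makes no difference. It then concludes for both $\util_i$ and $\util^\prime_i$ via $\util_i=\util^\prime_i-1$ and the monotonicity of $\log$, just as you do.
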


We now show that $\Phi$ is a generalized ordinal potential function for all games in which players maximize or minimize the type-balance functions $\util_i$ or $\util^\prime_i$. To that end, the following \Cref{lemma:potentialDiffIsUtilityDiff} proves that a type-balance-increasing or type-balance-decreasing unilateral deviation of a player leads to a proportional increase or decrease of the potential function values, respectively.

\begin{lemma}\label{lemma:potentialDiffIsUtilityDiff}
    Suppose a player $i\in\agents$ changes from its action $\action_i$ to an action $\action^\prime_i$. Then
    \[
        \Phi(\action^\prime_i,\profile_{-i})-\Phi(\profile)
        =
        \log\!\left(
            \frac{1+\util_i(\action^\prime_i,\profile_{-i})}{1+\util_i(\profile)} 
        \right)
           =
        \log\!\left(
            \frac{\util'_i(\action^\prime_i,\profile_{-i})}{\util^\prime_i(\profile)} 
        \right) 
           =
        \log(\util'_i(\action^\prime_i,\profile_{-i})) - \log(\util^\prime_i(\profile)).
    \]
\end{lemma}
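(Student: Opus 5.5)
The plan is a direct computation: a unilateral deviation only changes the counts at the two actions $\action_i$ and $\action'_i$, so only two summands of $\Phi$ change. We may assume $\action'_i \neq \action_i$, since otherwise all three expressions vanish. Write $j=\agentType{i}$. In $\profile$, abbreviate the counts at the old action as $a=\agentsCount_{\action_i}(\profile)$ and $a_j=\agentsCount^j_{\action_i}(\profile)$. Write the counts at the new action as $b=\agentsCount_{\action'_i}(\profile)$ and $b_j=\agentsCount^j_{\action'_i}(\profile)$. After the move, the old action has counts $a-1$ and $a_j-1$, with all other types unchanged. The new action has counts $b+1$ and $b_j+1$, again with other types unchanged. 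Every other action $\action\in\allActions$ keeps its counts, so its summand cancels in the difference.

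For the old action, the ratio of multinomial coefficients after versus before is
\[
\frac{(a-1)!\,/\,\bigl((a_j-1)!\prod_{k\neq j}\agentsCount^k_{\action_i}!\bigr)}{a!\,/\,\bigl(a_j!\prod_{k\neq j}\agentsCount^k_{\action_i}!\bigr)}=\frac{a_j}{a}.
\]
The other-type factorials cancel. For the new action the analogous ratio is $\frac{b+1}{b_j+1}$. Taking logarithms gives
\[
\Phi(\action'_i,\profile_{-i})-\Phi(\profile)=\log\frac{b+1}{b_j+1}-\log\frac{a}{a_j}.
\]

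Next, identify these with the utilities. By definition, $\util'_i(\profile)=a/a_j$. In the new profile, player $i$ sits at $\action'_i$ with total count $b+1$ and same-type count $b_j+1$, so $\util'_i(\action'_i,\profile_{-i})=(b+1)/(b_j+1)$. This yields the second and third equalities at once. The first equality follows from the identity $1+\util_i=\frac{(n-n^j)+n^j}{n^j}=\util'_i$, applied in both profiles.

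There is no real obstacle here. The only care needed is the bookkeeping that $\action_i\neq\action'_i$, so that the two summands are distinct. One also has to observe that the factorials of the types other than $j$ cancel, which is exactly why the multinomial (rather than, say, a product over all types) gives a clean telescoping ratio. Finally, the well-definedness noted earlier ensures $a_j\ge 1$ and $b_j+1\ge1$, so no division by zero or $\log 0$ arises.
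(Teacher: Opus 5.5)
Your proposal is correct and follows essentially the same route as the paper. Only the summands for $s_i$ and $s'_i$ change, the factorials of the other types cancel to leave $\log\frac{b+1}{b_j+1}-\log\frac{a}{a_j}$, and these two ratios are identified with $u'_i = 1+u_i$ in the new and old profiles. Your explicit treatment of the trivial case $s'_i=s_i$ and of well-definedness ($a_j\ge 1$) makes precise what the paper's computation assumes implicitly.
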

\begin{proof}
    Let $x=\action_i$ and $y=\action^\prime_i$ be the action before and after the deviation, respectively, and let $t=\agentType{i}$ be the type of player $i$. We write as shorthand
    \[
        \agentsCount_x=\agentsCount_x(\profile) \qquad\text{and}\qquad \agentsCount_y=\agentsCount_y(\profile),
    \]
    for the number of players using action $x$ and action $y$, respectively, and we write
    \[
        \agentsCount_x^t=\agentsCount_x^t(\profile)\qquad\text{and}\qquad \agentsCount_y^t=\agentsCount_y^t(\profile)
    \]
    for the number of players of type $t$ using the respective action.
    
    Clearly, after the deviation, $\agentsCount_x^t$ decreases by one and $\agentsCount_y^t$ increases by one, and correspondingly, $\agentsCount_x$ decreases and $\agentsCount_y$ increases by one. 
    When considering the difference in potential between $\Phi(\action^\prime_i,\profile_{-i})$ and $\Phi(\profile)$, only terms belonging to action $x$ and~$y$ change, thus the other terms vanish in the difference.
    What remains is the following, which can be simplified by first combining the logarithms and then canceling the factorials in the numerator and denominator:   
    \begin{align*}
        \Phi(\action^\prime_i,\profile_{-i})-\Phi(\profile)
        &=
        \log\!\left(
            \frac{(\agentsCount_y+1)!}{(\agentsCount_y^{t}+1)!\prod_{j\neq t}\agentsCount_{y}^j!}
            \cdot
            \frac{(\agentsCount_x-1)!}{(\agentsCount_x^t-1)!\prod_{j\neq t}\agentsCount_{x}^j!}
            \right)-\log\!\left(
            \frac{\agentsCount_y!}{\prod_j \agentsCount_{y}^j!}
            \cdot
            \frac{\agentsCount_x!}{\prod_j \agentsCount_{x}^j!}
        \right) \\
        &=
        \log\!\left(
            \frac{(\agentsCount_y+1)!}{(\agentsCount_y^{t}+1)!\prod_{j\neq t}\agentsCount_{y}^j!}
            \cdot
            \frac{(\agentsCount_x-1)!}{(\agentsCount_x^t-1)!\prod_{j\neq t}\agentsCount_{x}^j!}
            \cdot
            \frac{\prod_j \agentsCount_{y}^j!}{\agentsCount_y!}
            \cdot
            \frac{\prod_j \agentsCount_{x}^j!}{\agentsCount_x!}
        \right) \\
        &=
        \log\!\left(
            \frac{\agentsCount_y+1}{\agentsCount_y^{t}+1}
            \cdot
            \frac{\agentsCount_{x}^t}{n_x}
        \right).
    \end{align*}
    Next, we can expand both sub-fractions and insert the definition of player $i$'s utility, i.e., their type-balance, before and after the deviation, obtaining
    \[
        \frac{\agentsCount_y+1}{\agentsCount_y^{t}+1}
        =
        \frac{\agentsCount_y+1-(\agentsCount_y^t+1)+(\agentsCount_y^t+1)}{\agentsCount_y^{t}+1}
        =
        1+\util_i(\action^\prime_i,\profile_{-i})
    \]
    and
    \[
        \frac{\agentsCount_x}{\agentsCount_x^t}
        =
        \frac{\agentsCount_x-\agentsCount_x^t+\agentsCount_x^t}{\agentsCount_x^t}
        =
        1+\util_i(\profile).
    \]
    Furthermore, we have
    \[
        \frac{\agentsCount_y+1}{\agentsCount_y^{t}+1}
        =
        \util'_i(\action^\prime_i,\profile_{-i})
        \qquad\text{and}\qquad
        \frac{\agentsCount_x}{\agentsCount_x^t}
        =
        \util'_i(\profile).
    \]
    By substituting, we obtain, as desired:
    \[
        \Phi(\action^\prime_i,\profile_{-i})-\Phi(\profile)
        =
        \log\!\left(
            \frac{1+\util_i(\action^\prime_i,\profile_{-i})}{1+\util_i(\profile)}
        \right)   =
        \log\!\left(
            \frac{\util'_i(\action^\prime_i,\profile_{-i})}{\util'_i(\profile)}
        \right)
        =
        \log(\util'_i(\action^\prime_i,\profile_{-i})) - \log(\util^\prime_i(\profile)).\qedhere
    \]
\end{proof}

In our next result, we show not only that $\Phi$ is a potential function (so PNEs can be found by following improving moves), but also that improvement sequences converge in polynomial time.

\begin{theorem}
    \label{thm:time}
    In \gameName{Type-Composition Games}, every sequence of improving moves terminates after at most $\Oh{\numAgents^3\log \numTypes}$ many steps.
\end{theorem}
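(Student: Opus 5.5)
The plan is to combine \Cref{lemma:potentialDiffIsUtilityDiff} with two estimates. First, a bound on the total range of $\Phi$. Second, a uniform lower bound on how much $\Phi$ changes in a single improving move. The number of steps is then at most the range divided by the minimum change.

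\emph{Range of $\Phi$.} Every multinomial coefficient is at least $1$, so each summand of $\Phi$ is nonnegative and $\Phi(\profile)\ge 0$. For the upper bound I will use the multinomial theorem: $\binom{\agentsCount_\action}{\agentsCount_\action^1,\ldots,\agentsCount_\action^\numTypes}\le (1+\cdots+1)^{\agentsCount_\action}=\numTypes^{\agentsCount_\action}$. Summing over actions gives
\[
\Phi(\profile)\le \sum_{\action}\agentsCount_\action\log\numTypes=\numAgents\log\numTypes .
\]
Hence $\Phi$ always lies in $[0,\numAgents\log\numTypes]$. Only the $\numAgents$ players contribute, so the number of actions $|\allActions|$ does not enter this bound.

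\emph{Minimum change per move.} Since $\util_i'=1+\util_i$, a move that improves $\util_i$ is exactly a move that improves $\util_i'$, and it suffices to treat $\util'$. Consider a move of player $i$ from $x$ to $y$. By \Cref{lemma:potentialDiffIsUtilityDiff}, the change in $\Phi$ is $\log(\util_i'(\action_i',\profile_{-i})/\util_i'(\profile))$.

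Write $\util_i'(\profile)=a/b$ with $a=\agentsCount_x$ and $b=\agentsCount_x^t$. Write $\util_i'(\action_i',\profile_{-i})=c/d$ with $c=\agentsCount_y+1$ and $d=\agentsCount_y^t+1$. All four numbers are positive integers of size at most $\numAgents$.

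For a maximizing player the move is improving, so $c/d>a/b$, i.e. $cb>ad$. Since these are integers, $cb\ge ad+1$. Therefore
\[
\frac{c/d}{a/b}=\frac{cb}{ad}\ge 1+\frac{1}{ad}\ge 1+\frac{1}{\numAgents^2},
\]
and $\Phi$ increases by at least $\log(1+1/\numAgents^2)\ge 1/(2\numAgents^2)$.

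For minimizing players the argument is symmetric. We get $ad\ge cb+1$, so the ratio is at most $1/(1+1/\numAgents^2)$, and $\Phi$ decreases by at least the same amount.

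\emph{Conclusion and main obstacle.} Dividing the range by the minimum change bounds any improvement sequence by $\numAgents\log\numTypes\cdot 2\numAgents^2=\Oh{\numAgents^3\log\numTypes}$ steps. The only delicate step is the integrality argument for the minimum change. The point to check is that the strict utility inequality lifts to an additive gap of $1$ between the integer cross-products $cb$ and $ad$, with both bounded by $\numAgents^2$. This requires that the counts after the move, namely $\agentsCount_y+1$ and $\agentsCount_y^t+1$, still do not exceed $\numAgents$. They do not, because the deviating player is counted among the players at $y$.
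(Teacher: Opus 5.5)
Your proof is correct and follows the paper's argument. You bound $\Phi$ in $[0,\numAgents\log\numTypes]$ via the multinomial theorem, then use \Cref{lemma:potentialDiffIsUtilityDiff} together with integrality of the counts to show each improving move changes $\Phi$ by $\Omega(1/\numAgents^2)$. Your version is slightly cleaner: you apply the multinomial theorem directly instead of the paper's detour through the balanced factorial bound, you reduce $\util_i$ to $\util'_i$ via $\util'_i=1+\util_i$ rather than treating both separately, and you spell out the minimizing case.
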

\begin{proof}
    We observe that for any action $\action\in\allActions$ it holds that
    \[
        \frac{\agentsCount_\action(\profile)!}{\prod_{i=1}^{\numTypes} \agentsCount_\action^i(\profile)!} \leq \frac{\agentsCount_\action(\profile)!}{(\agentsCount_{\action}(\profile)/\numTypes)!^\numTypes}\le 
        \sum_{k_1+\cdots+k_\numTypes = \agentsCount_{\action}(\profile)} \frac{\agentsCount_{\action}(\profile)!}{k_1!\cdots k_\numTypes!} = 
        \numTypes^{\agentsCount_\action(\profile)},
    \]
    and therefore, we can bound the potential of every profile $\profile$ by
    \begin{align}\label{eq:potentialbound}
        0\le \Phi(\profile) =  \sum_{\action\in \allActions}
        \log\!\left(
            \frac{\agentsCount_\action(\profile)!}{\prod_{j=1}^{\numTypes} \agentsCount^j_\action(\profile)!}
        \right) \le \sum_{\action \in \allActions} \agentsCount_\action(\profile)\log \numTypes=\numAgents\log \numTypes.
    \end{align}
    
    To bound the maximal number of steps, we consider the change of the potential of an improving move.  \Cref{lemma:potentialDiffIsUtilityDiff} gives
    \[
        \Phi(\profile')-\Phi(\profile)
        =
        \log\!\left(
            \frac{1+\util_i(\profile')}{1+\util_i(\profile)}
        \right)
    \qquad
    \text{and}
    \qquad
        \Phi(\profile')-\Phi(\profile)
        =
        \log\!\left(
            \frac{\util'_i(\profile')}{u'_i(\profile)}
        \right).
    \] 
    
    For ease of notation let $\util_i(\profile) = \frac{x}{y}$ and $\util_i(\profile^\prime)= \frac{x'}{y'}$ for some integers $1\leq x,y,x',y' \leq \numAgents$. Then
    \[
        \frac{1+\util_i(\profile')}{1+\util_i(\profile)}
        = \frac{1+ \frac{x'}{y'}}{1+ \frac{x}{y}}=\frac{(x'+y')\cdot y}{(x+y)\cdot y'}
        \geq
        1+\frac1{2\numAgents^2},
    \]
    where the last inequality follows from the fact that all variables are integers at most $\numAgents$ and that the term is strictly larger than $1$.
    Similarly,
    \[
        \frac{\util^\prime_i(\profile')}{\util^\prime_i(\profile)}
        = \frac{ \frac{x'}{y'}}{\frac{x}{y}}=\frac{x'y}{xy'}
        \ge 
        1+\frac1{\numAgents^2}.
    \]
    Therefore, the change of potential can be lower bounded by
    \begin{align}\label{eq:potchange}
        \Phi(\profile')-\Phi(\profile)
        \ge
        \log\!\left(1+\frac1{\numAgents^2}\right)
        \ge
        \frac{1}{\numAgents^2+1}.   
    \end{align}
    Thus, by combining \eqref{eq:potentialbound} and \eqref{eq:potchange}, every sequence of improving moves terminates after at most
    $%
        \Oh{\numAgents^3\log \numTypes}
    $ %
    many moves, finishing the proof.
\end{proof}

Finding an improving player can be easily done in $\Oh{\numTypes\cdot |\allActions| + \numAgents\cdot |\allActions|}$ time by calculating the fractions for each action and each type once, and then checking for each player if it can improve at each action. Therefore, we obtain the following.

\begin{corollary}
    Computing a pure Nash equilibrium for a given \gameName{Type-Composition Game} can be done in $\Oh{\numAgents^4\cdot|\allActions| \log \numTypes}$ time.
\end{corollary}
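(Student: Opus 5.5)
The corollary follows by combining \Cref{thm:time} with a bound on the cost of a single round of improving response dynamics. The algorithm starts from an arbitrary strategy profile $\profile$, e.g., every player picks the first action of $\actions_i$. As long as some player has an improving move, it performs one such move. By \Cref{thm:potentialFunction} and \Cref{thm:time}, this loop stops after $\Oh{\numAgents^3\log\numTypes}$ iterations. At that point no player has an improving move, so by definition the current profile is a PNE.

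It remains to bound the work per iteration, i.e., the time needed to detect an improving move or certify that none exists. We will maintain the counters $\agentsCount_\action^j(\profile)$ for all $\action\in\allActions$ and $j\in\{1,\dots,\numTypes\}$, together with the totals $\agentsCount_\action(\profile)$. Initializing them costs $\Oh{\numTypes\cdot|\allActions|+\numAgents}$ time. After a move from $x$ to $y$ by a player of type $t$, only four counters change, so the update takes $\Oh{1}$ time.

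To search for an improving move, we iterate over each player $i$ with type $t=\agentType{i}$ and current action $x=\action_i$. We compute the current utility $\util_i(\profile)=(\agentsCount_x-\agentsCount_x^t)/\agentsCount_x^t$, or $\util'_i$ in the other variant. For every $y\in\actions_i\setminus\{x\}$, we compute the utility after deviating as $(\agentsCount_y+1-(\agentsCount_y^t+1))/(\agentsCount_y^t+1)$ in $\Oh{1}$ time, using the stored counters. Comparisons of rationals can be done exactly by cross-multiplying integers bounded by $\numAgents$. This scan costs $\Oh{\numAgents\cdot|\allActions|}$ per iteration, which matches the per-step cost $\Oh{\numTypes\cdot|\allActions|+\numAgents\cdot|\allActions|}$ stated in the paper.

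Multiplying the per-iteration cost by the number of iterations gives $\Oh{\numAgents^3\log\numTypes\cdot\numAgents\cdot|\allActions|}=\Oh{\numAgents^4\cdot|\allActions|\log\numTypes}$. We add the one-time initialization term. If counters were recomputed from scratch every round, there would be an extra term $\Oh{\numTypes\cdot|\allActions|}$ per round. We may assume $\numTypes\le\numAgents$, since empty types can be discarded, so this term is also dominated. The only points needing care are the incremental maintenance of counters and exact rational comparison; neither is a real obstacle. The substantive content, namely termination within the polynomial bound, is already provided by \Cref{thm:time}.
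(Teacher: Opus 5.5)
Your proposal is correct and follows essentially the paper's argument: bound the number of improving moves by $\mathcal{O}(n^3\log q)$ via Theorem~\ref{thm:time}, then multiply by the $\mathcal{O}(n\cdot|\mathcal{A}|)$ cost per step of finding an improving move or certifying that none exists. Two of your details are small refinements of points the paper leaves implicit: maintaining the counters incrementally, and noting that $q\le n$, which is what absorbs the paper's $\mathcal{O}(q\cdot|\mathcal{A}|)$ recomputation term.
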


\section{Applications}

We now show how our main theorem simplifies and generalizes known results.

\subsection{Bakers and Millers}

\begin{figure}[t]
    \centering
    \includegraphics[width=0.9\linewidth]{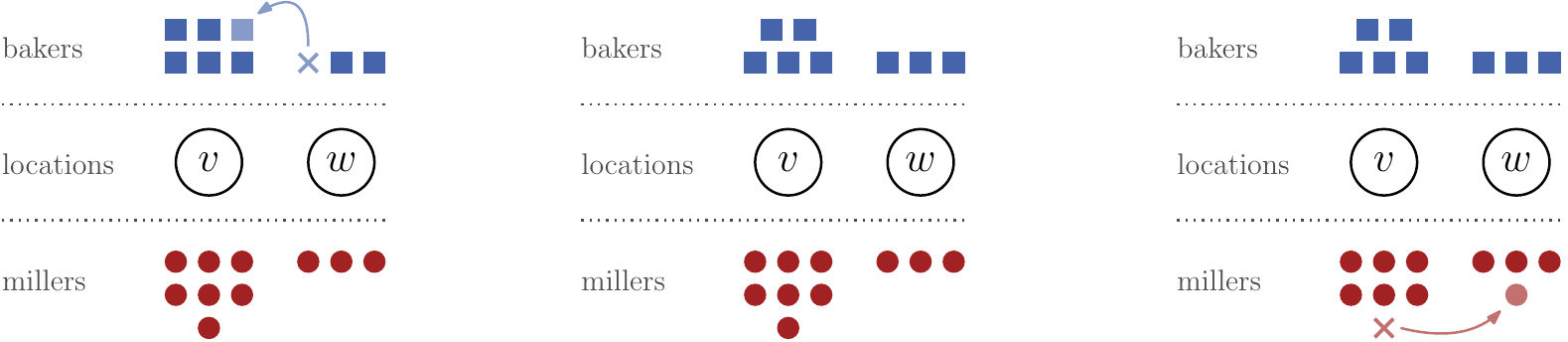}
    \caption{Example instance of the \gameName{bakers and millers game} with two locations $v$ and $w$, eight bakers (blue squares) and ten millers (red circles). Based on the strategy profile in the middle, a baker in location $w$ can improve by moving to location $v$ (see left, utility $1 \leadsto \frac{7}{6}$), and a miller in location $v$ can improve by moving to location $w$ (see right, utility $\frac{5}{7} \leadsto \frac{3}{4}$).}
    \label{fig:bm-example}
\end{figure}

The \gameName{bakers and millers game} (\BnM{}) was introduced by Aziz \emph{et al.}~\cite{AzizBBHOP2019} and later studied by Krogmann \emph{et al.}~\cite{KrogmannLS2025}
to model strategic location choice by customers and sellers, referred to as the bakers and millers. This is an example of a diversity-seeking game; for example, millers seek to choose locations with many bakers (to sell flour to) but few other millers (to minimize competition). See \Cref{fig:bm-example} for an example of an instance of the \gameName{bakers and millers game}. Now, we introduce the game more formally, starting with the original definition from~\cite{KrogmannLS2025}.

\paragraph{The Bakers and Millers Game with One-Sided Location Restrictions.}
There is a set of \emph{locations} $\locations$ and a set of players or agents $\agents = \bakers \cup \millers$, where players of $\bakers$ are called \emph{bakers} and players of $\millers$ are called \emph{millers}. Each baker $b\in\bakers$ has a set of feasible locations $\locations_b \subseteq \locations$ they may choose from. A miller can choose any location in $\locations$.
A player's strategy is the choice of a feasible location.
The payoff of a baker $b\in\bakers$ at a location $\ell\in\locations$ is the number of millers that also chose $\ell$ divided by the number of bakers at $\ell$. The payoff for millers is analogously defined to be the inverse.

\paragraph{The Bakers and Millers Game with Two-Sided Location Restrictions.}
A natural generalization of the model is to also allow location restrictions for the millers, i.e., each miller $m\in\millers$ has a set of feasible locations $\locations_m \subseteq \locations$, and their strategy is restricted to these locations. We call the resulting model the \gameName{bakers and millers game with two-sided Location Restrictions}.\\

We show that even the more general version of the \gameName{bakers and millers game} is a special type of \gameName{type-composition game}, and therefore, we directly obtain all desirable properties of the latter class.

\begin{theorem}\label{th:bam}
    The \gameName{bakers and millers game with two-sided Location Restrictions} is a \gameName{type-composition game}.
\end{theorem}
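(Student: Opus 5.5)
The plan is to exhibit an explicit embedding of any instance of the \gameName{bakers and millers game with two-sided Location Restrictions} into the \gameName{Type-Composition Games} framework of \Cref{sec:diversityGames}, and to check that the utilities coincide. We set $\numTypes=2$, with $\type_1=\bakers$ and $\type_2=\millers$, so $\agents=\bakers\cup\millers$ is a partition as required. Actions are locations: for each player $i$ we set $\actions_i=\locations_i$ (this is $\locations_b$ for bakers and $\locations_m$ for millers), so that $\allActions\subseteq\locations$. Strategy profiles of the two games then coincide, and for every profile $\profile$ and location $\ell$ we have $\agentsCount^1_\ell(\profile)$ bakers and $\agentsCount^2_\ell(\profile)$ millers at $\ell$.

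Next we compare payoffs. A baker $b$ at $\ell=\action_b$ receives, by definition, the number of millers at $\ell$ divided by the number of bakers at $\ell$, that is $\agentsCount^2_\ell(\profile)/\agentsCount^1_\ell(\profile)$. With only two types, the type-balance is $\util_b(\profile)=\sum_{j\neq 1}\agentsCount^j_\ell/\agentsCount^1_\ell$, which is exactly the same quantity. Symmetrically, a miller's payoff is $\agentsCount^1_\ell/\agentsCount^2_\ell=\util_m(\profile)$. All denominators are positive because the evaluated player is present at $\ell$. Both types maximize their payoff, and so do maximizing players in the type-composition game. The improving moves of the two games therefore coincide, and the BMG instance is literally a type-composition game with utilities $\util_i$.

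The only step needing care is the miller payoff. The paper says it is "the inverse", and I would read this as the inverse ratio, bakers over millers, which is the miller's type-balance, and not as $1/\util_b$ with a minimization objective. Under either reading, improving moves are the same. A miller improving from $\ell$ to $\ell'$ means $\agentsCount^1_{\ell'}/(\agentsCount^2_{\ell'}+1) > \agentsCount^1_\ell/\agentsCount^2_\ell$, which is precisely an increase of $\util_m$.

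As a consequence, \Cref{thm:potentialFunction}, \Cref{thm:time}, and the corollary apply directly. This yields the FIP and polynomial-time computation of a PNE even with two-sided restrictions.
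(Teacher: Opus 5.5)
Your proposal is correct and is essentially the paper's proof: set $\numTypes=2$, $\type_1=\bakers$, $\type_2=\millers$, and $\actions_i=\locations_i$ for every player, then observe that the two-type type-balance $\util_i$ equals the BMG payoff, which you check explicitly where the paper just calls it straightforward. One small slip in your aside: a miller minimizing $1/\util_b$ (bakers over millers) would reverse the millers' improving moves, so the equivalent alternative reading is minimizing millers over bakers; this does not affect your argument, since you adopt the correct reading.
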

\begin{proof}
    We set $\type_1 = \bakers$ and $\type_2 = \millers$. Next, for every baker $b\in\bakers$, we set $\actions_b = \locations_b$. Similarly, for each miller $m \in \millers$, we set $\actions_m = \locations_m$. This yields an equivalent \gameName{type-composition game} with $q=2$ types. It is straightforward that the utility function $\util_i$ exactly reflects the utility function of the \gameName{bakers and millers game with two-sided Location Restrictions}.
\end{proof}

Krogmann \emph{et al.}~\cite{KrogmannLS2025} proved the existence of pure Nash equilibria in the basic model by giving an explicit algorithm to compute one. This non-trivial algorithm crucially depends on the property that the players on one side, i.e., the millers, can be assigned to arbitrary locations. The existence of a potential function was left as an intricate open question. By \Cref{th:bam} we can answer this question positively, even while generalizing their model to the \gameName{bakers and millers game with two-sided Location Restrictions}, i.e., with the possible restriction of the feasible locations for both~sides.

\begin{corollary}
    The \gameName{bakers and millers game with two-sided Location Restrictions} has the finite improvement property, and hence admits a pure Nash equilibrium.
\end{corollary}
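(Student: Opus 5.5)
The corollary should follow by combining \Cref{th:bam} with \Cref{thm:potentialFunction}. The plan is first to fix a two-sided instance and take the \gameName{type-composition game} produced in the proof of \Cref{th:bam}. That game has $\type_1=\bakers$, $\type_2=\millers$ and $\actions_i=\locations_i$, and every player maximizes the type-balance $\util_i$. For a baker, $\util_b$ is the number of millers at $\ell$ divided by the number of bakers at $\ell$, which is exactly the baker payoff. For a miller, $\util_m$ is the number of bakers divided by the number of millers at $\ell$, which is the ``inverse'' payoff from the definition. The strategy sets and payoffs therefore coincide, so the two games have exactly the same improving moves and the same improvement paths.

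Next, I will invoke \Cref{thm:potentialFunction}. It says $\Phi$ is a generalized ordinal potential, so every improving move strictly increases $\Phi$; by \Cref{lemma:potentialDiffIsUtilityDiff} the increase equals $\log\frac{1+\util_i(\profile')}{1+\util_i(\profile)}>0$. Because the strategy space $\prod_i \locations_i$ is finite, $\Phi$ takes only finitely many values. A strictly increasing sequence of such values cannot be infinite, and in particular no profile can repeat. This gives the FIP. Alternatively, \Cref{thm:time} bounds the length of every improvement path directly by $\Oh{\numAgents^3}$, since $\numTypes=2$.

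Existence of a PNE then follows from the FIP by a standard argument. Start from any profile, which exists as long as every $\locations_i$ is nonempty; I will state this as an implicit assumption. Repeatedly perform an improving move while one is available. By the FIP this process stops, and the final profile admits no improving move, so every player is playing a best response and the profile is a PNE. Equivalently, a profile maximizing $\Phi$ over the finite strategy space is a PNE.

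The only step that needs real care is the translation in the first paragraph. The miller payoff must equal $\util_m$ and not $\util'_m$ or something with a different sign convention, and all players must be maximizers. It also has to hold for every profile, including locations with no players of the other type, where the payoff is $0$; this is well-defined because the deviating player's own type count is always at least one. Once that is checked, the rest is immediate.
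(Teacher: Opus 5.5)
Your proposal is correct and takes the same route as the paper. The paper also embeds the two-sided game as a two-type type-composition game (Theorem~\ref{th:bam}) and applies Theorem~\ref{thm:potentialFunction}, using the finite strategy space to get the FIP and hence a PNE; your check that the miller payoff is exactly $\util_m$ and that all players are maximizers is the step the paper calls straightforward.
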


Moreover, \Cref{thm:time} implies a polynomial-time procedure to compute pure Nash equilibria even for the most general version of the \gameName{bakers and millers game}. However, we note that for the basic version, the algorithm of Krogmann \emph{et al.}~\cite{KrogmannLS2025} is faster, running in $\Oh{\numAgents^3}$ time.

\subsection{Resource Selection with Homophilic Agents}

The \gameName{Resource Selection Game with Homophilic Agents} (RSG)~\cite{GadeaHarderKLS2023}, is also a special case of our \gameName{type-composition game}  with two types.
Due to space constraints, we only sketch our results for these games. They essentially correspond to games in which agents seek to maximize the inverse of $u'_i$. However, the utility is capped at a given parameter $\tau$.
Gadea Harder \emph{et al.}~\cite{GadeaHarderKLS2023} prove PNE existence only for the special case where $\tau \leq \nicefrac{1}{2}$. This proof is non-trivial as it uses a function that almost behaves like a potential, but some strategy changes of the players do not change the potential value. The authors then show that such neutral changes can only occur finitely often. 

Our approach yields a much simpler PNE existence proof and, most importantly, it works for all values of the parameter $\tau$. 
This answers a conjecture from the paper positively. Moreover, it was also open whether PNEs can be found efficiently, which we also resolved positively. Last but not least, our novel potential function can be applied to the RSG with more than two types, i.e., we can show PNE existence and efficient construction for this more general model. 

\section{Conclusion}
We introduce a class of potential functions that, to the best of our knowledge, has not been used in Algorithmic Game Theory (nor in other fields) before. This novel potential has a natural interpretation stemming from information theory entropy. 

Equipped with our new powerful tool, we derive simple and elegant PNE existence proofs for two previously studied types of games. Moreover, the versatility of the novel potential allows us to positively resolve several open problems. This shows that new and stronger results can be obtained by truly simple techniques like generalized ordinal potential functions. 

\begin{acks}
    This research was (partially) funded by the HPI Research School on Foundations of AI (FAI), from the \grantsponsor{ERC}{European Research Council}{} (ERC) under the European Union's Horizon 2020 research and innovation programme (grant agreement No \grantnum{ERC}{101002854}), and by the \grantsponsor{EU}{European Union} co-funded project Robotics and Advanced Industrial Production (reg. no. \grantnum{EU}{CZ.02.01.01/00/22\_008/0004590}).

    \vspace{0.1cm}
    \begin{center}
        \includegraphics[width=2.25cm]{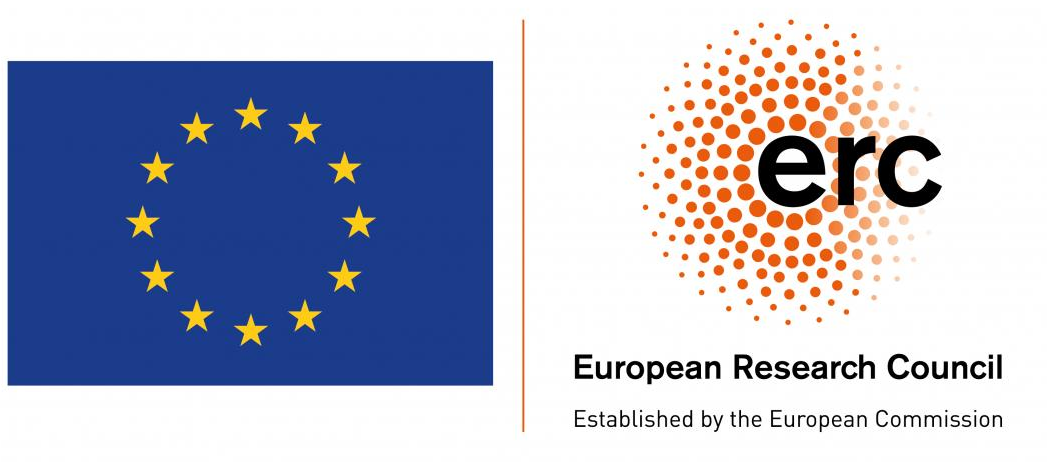}
    \end{center}
\end{acks}

\section*{Tool and computational resource disclosure}
During the preparation of this manuscript, the authors used ChatGPT as an interactive assistant. The tool was used for brainstorming and stress-testing ideas, exploring candidate potential functions, including during the process leading to the entropy potential studied in this paper, and checking algebraic derivations and small examples. All mathematical statements, proofs, examples, citations, and final text were checked, edited, and approved by the human authors, who take full responsibility for the content and attribution of the paper. 

\bibliographystyle{ACM-Reference-Format}
\bibliography{references}

%%% -*-BibTeX-*-
%%% Do NOT edit. File created by BibTeX with style
%%% ACM-Reference-Format-Journals [18-Jan-2012].

\begin{thebibliography}{52}

%%% ====================================================================
%%% NOTE TO THE USER: you can override these defaults by providing
%%% customized versions of any of these macros before the \bibliography
%%% command.  Each of them MUST provide its own final punctuation,
%%% except for \shownote{}, \showDOI{}, and \showURL{}.  The latter two
%%% do not use final punctuation, in order to avoid confusing it with
%%% the Web address.
%%%
%%% To suppress output of a particular field, define its macro to expand
%%% to an empty string, or better, \unskip, like this:
%%%
%%% \newcommand{\showDOI}[1]{\unskip}   % LaTeX syntax
%%%
%%% \def \showDOI #1{\unskip}           % plain TeX syntax
%%%
%%% ====================================================================

\ifx \showCODEN    \undefined \def \showCODEN     #1{\unskip}     \fi
\ifx \showDOI      \undefined \def \showDOI       #1{#1}\fi
\ifx \showISBNx    \undefined \def \showISBNx     #1{\unskip}     \fi
\ifx \showISBNxiii \undefined \def \showISBNxiii  #1{\unskip}     \fi
\ifx \showISSN     \undefined \def \showISSN      #1{\unskip}     \fi
\ifx \showLCCN     \undefined \def \showLCCN      #1{\unskip}     \fi
\ifx \shownote     \undefined \def \shownote      #1{#1}          \fi
\ifx \showarticletitle \undefined \def \showarticletitle #1{#1}   \fi
\ifx \showURL      \undefined \def \showURL       {\relax}        \fi
% The following commands are used for tagged output and should be
% invisible to TeX
\providecommand\bibfield[2]{#2}
\providecommand\bibinfo[2]{#2}
\providecommand\natexlab[1]{#1}
\providecommand\showeprint[2][]{arXiv:#2}

\bibitem[Agarwal et~al\mbox{.}(2021)]%
        {AgarwalEGISV2021}
\bibfield{author}{\bibinfo{person}{Aishwarya Agarwal}, \bibinfo{person}{Edith
  Elkind}, \bibinfo{person}{Jiarui Gan}, \bibinfo{person}{Ayumi Igarashi},
  \bibinfo{person}{Warut Suksompong}, {and} \bibinfo{person}{Alexandros~A.
  Voudouris}.} \bibinfo{year}{2021}\natexlab{}.
\newblock \showarticletitle{Schelling Games on Graphs}.
\newblock \bibinfo{journal}{\emph{\ifshortJournalName{}Artif.
  Intell.\else{}Artificial Intelligence\fi{}}}  \bibinfo{volume}{301}
  (\bibinfo{year}{2021}), \bibinfo{pages}{103576}.
\newblock
\urldef\tempurl%
\url{https://doi.org/10.1016/j.artint.2021.103576}
\showDOI{\tempurl}


\bibitem[Anshelevich et~al\mbox{.}(2008)]%
        {AnshelevichDKTWR2008}
\bibfield{author}{\bibinfo{person}{Elliot Anshelevich},
  \bibinfo{person}{Anirban Dasgupta}, \bibinfo{person}{Jon~M. Kleinberg},
  \bibinfo{person}{{\'{E}}va Tardos}, \bibinfo{person}{Tom Wexler}, {and}
  \bibinfo{person}{Tim Roughgarden}.} \bibinfo{year}{2008}\natexlab{}.
\newblock \showarticletitle{The Price of Stability for Network Design With Fair
  Cost Allocation}.
\newblock \bibinfo{journal}{\emph{\ifshortJournalName{}SIAM J.
  Comput.\else{}SIAM Journal on Computing\fi{}}} \bibinfo{volume}{38},
  \bibinfo{number}{4} (\bibinfo{year}{2008}), \bibinfo{pages}{1602--1623}.
\newblock
\urldef\tempurl%
\url{https://doi.org/10.1137/070680096}
\showDOI{\tempurl}


\bibitem[Aziz et~al\mbox{.}(2019)]%
        {AzizBBHOP2019}
\bibfield{author}{\bibinfo{person}{Haris Aziz}, \bibinfo{person}{Florian
  Brandl}, \bibinfo{person}{Felix Brandt}, \bibinfo{person}{Paul Harrenstein},
  \bibinfo{person}{Martin Olsen}, {and} \bibinfo{person}{Dominik Peters}.}
  \bibinfo{year}{2019}\natexlab{}.
\newblock \showarticletitle{Fractional Hedonic Games}.
\newblock \bibinfo{journal}{\emph{\ifshortJournalName{}{ACM} Trans. Econ.
  Comput.\else{}{ACM} Transactions on Economics and Computation\fi{}}}
  \bibinfo{volume}{7}, \bibinfo{number}{2}, Article \bibinfo{articleno}{6}
  (\bibinfo{year}{2019}), \bibinfo{numpages}{29}~pages.
\newblock
\urldef\tempurl%
\url{https://doi.org/10.1145/3327970}
\showDOI{\tempurl}


\bibitem[Bil{\`{o}} et~al\mbox{.}(2023)]%
        {BiloBDLMS2023}
\bibfield{author}{\bibinfo{person}{Davide Bil{\`{o}}},
  \bibinfo{person}{Vittorio Bil{\`{o}}}, \bibinfo{person}{Michelle
  D{\"{o}}ring}, \bibinfo{person}{Pascal Lenzner}, \bibinfo{person}{Louise
  Molitor}, {and} \bibinfo{person}{Jonas Schmidt}.}
  \bibinfo{year}{2023}\natexlab{}.
\newblock \showarticletitle{Schelling Games With Continuous Types}. In
  \bibinfo{booktitle}{\emph{\ifshortConferenceName{}Proc.
  {IJCAI}~'23\else{}Proceedings of the 32nd International Joint Conference on
  Artificial Intelligence, {IJCAI}~'23\fi{}}} (Macao, SAR, China),
  \bibfield{editor}{\bibinfo{person}{Edith Elkind}} (Ed.).
  \bibinfo{publisher}{ijcai.org}, \bibinfo{pages}{2520--2527}.
\newblock
\urldef\tempurl%
\url{https://doi.org/10.24963/ijcai.2023/280}
\showDOI{\tempurl}


\bibitem[Bil{\`{o}} et~al\mbox{.}(2022a)]%
        {BiloBLM2022a}
\bibfield{author}{\bibinfo{person}{Davide Bil{\`{o}}},
  \bibinfo{person}{Vittorio Bil{\`{o}}}, \bibinfo{person}{Pascal Lenzner},
  {and} \bibinfo{person}{Louise Molitor}.} \bibinfo{year}{2022}\natexlab{a}.
\newblock \showarticletitle{Tolerance Is Necessary for Stability: Single-Peaked
  Swap {S}chelling Games}. In
  \bibinfo{booktitle}{\emph{\ifshortConferenceName{}Proc.
  {IJCAI}~'22\else{}Proceedings of the 31st International Joint Conference on
  Artificial Intelligence, {IJCAI}~'22\fi{}}} (Vienna, Austria),
  \bibfield{editor}{\bibinfo{person}{Luc~De Raedt}} (Ed.).
  \bibinfo{publisher}{ijcai.org}, \bibinfo{pages}{81--87}.
\newblock
\urldef\tempurl%
\url{https://doi.org/10.24963/ijcai.2022/12}
\showDOI{\tempurl}


\bibitem[Bil{\`{o}} et~al\mbox{.}(2022b)]%
        {BiloBLM2022b}
\bibfield{author}{\bibinfo{person}{Davide Bil{\`{o}}},
  \bibinfo{person}{Vittorio Bil{\`{o}}}, \bibinfo{person}{Pascal Lenzner},
  {and} \bibinfo{person}{Louise Molitor}.} \bibinfo{year}{2022}\natexlab{b}.
\newblock \showarticletitle{Topological Influence and Locality in Swap
  {S}chelling Games}.
\newblock \bibinfo{journal}{\emph{\ifshortJournalName{}Auton. Agents Multi
  Agent Syst.\else{}Autonomous Agents and Multi-Agent Systems\fi{}}}
  \bibinfo{volume}{36}, \bibinfo{number}{2} (\bibinfo{year}{2022}),
  \bibinfo{pages}{47}.
\newblock
\urldef\tempurl%
\url{https://doi.org/10.1007/s10458-022-09573-7}
\showDOI{\tempurl}


\bibitem[Bil{\`o} et~al\mbox{.}(2018)]%
        {BiloFFMM2018}
\bibfield{author}{\bibinfo{person}{Vittorio Bil{\`o}}, \bibinfo{person}{Angelo
  Fanelli}, \bibinfo{person}{Michele Flammini}, \bibinfo{person}{Gianpiero
  Monaco}, {and} \bibinfo{person}{Luca Moscardelli}.}
  \bibinfo{year}{2018}\natexlab{}.
\newblock \showarticletitle{Nash Stable Outcomes in Fractional Hedonic Games:
  {E}xistence, Efficiency and Computation}.
\newblock \bibinfo{journal}{\emph{\ifshortJournalName{}J. Artif. Intell.
  Res.\else{}Journal of Artificial Intelligence Research\fi{}}}
  \bibinfo{volume}{62} (\bibinfo{year}{2018}), \bibinfo{pages}{315--371}.
\newblock
\urldef\tempurl%
\url{https://doi.org/10.1613/jair.1.11211}
\showDOI{\tempurl}


\bibitem[Bil{\`{o}} et~al\mbox{.}(2020)]%
        {BiloFM2020}
\bibfield{author}{\bibinfo{person}{Vittorio Bil{\`{o}}},
  \bibinfo{person}{Michele Flammini}, {and} \bibinfo{person}{Luca
  Moscardelli}.} \bibinfo{year}{2020}\natexlab{}.
\newblock \showarticletitle{The Price of Stability for Undirected Broadcast
  Network Design With Fair Cost Allocation Is Constant}.
\newblock \bibinfo{journal}{\emph{\ifshortJournalName{}Games Econ.
  Behav.\else{}Games and Economic Behavior\fi{}}}  \bibinfo{volume}{123}
  (\bibinfo{year}{2020}), \bibinfo{pages}{359--376}.
\newblock
\urldef\tempurl%
\url{https://doi.org/10.1016/j.geb.2014.09.010}
\showDOI{\tempurl}


\bibitem[Boehmer and Elkind(2020)]%
        {BoehmerE2020}
\bibfield{author}{\bibinfo{person}{Niclas Boehmer} {and} \bibinfo{person}{Edith
  Elkind}.} \bibinfo{year}{2020}\natexlab{}.
\newblock \showarticletitle{Individual-Based Stability in Hedonic Diversity
  Games}. In \bibinfo{booktitle}{\emph{\ifshortConferenceName{}Proc.
  {AAAI}~'20\else{}Proceedings of the 34th {AAAI} Conference on Artificial
  Intelligence, {AAAI}~'20\fi{}}} (New York, NY, USA),
  \bibfield{editor}{\bibinfo{person}{Vincent Conitzer} {and}
  \bibinfo{person}{Fei Sha}} (Eds.). \bibinfo{publisher}{AAAI Press},
  \bibinfo{address}{Palo Alto, CA, USA}, \bibinfo{pages}{1822--1829}.
\newblock
\urldef\tempurl%
\url{https://doi.org/10.1609/aaai.V34I02.5549}
\showDOI{\tempurl}


\bibitem[Bogomolnaia and Jackson(2002)]%
        {BogomolnaiaJ2002}
\bibfield{author}{\bibinfo{person}{Anna Bogomolnaia} {and}
  \bibinfo{person}{Matthew~O. Jackson}.} \bibinfo{year}{2002}\natexlab{}.
\newblock \showarticletitle{The Stability of Hedonic Coalition Structures}.
\newblock \bibinfo{journal}{\emph{\ifshortJournalName{}Games Econ.
  Behav.\else{}Games and Economic Behavior\fi{}}} \bibinfo{volume}{38},
  \bibinfo{number}{2} (\bibinfo{year}{2002}), \bibinfo{pages}{201--230}.
\newblock
\urldef\tempurl%
\url{https://doi.org/10.1006/game.2001.0877}
\showDOI{\tempurl}


\bibitem[Brandt et~al\mbox{.}(2012)]%
        {BrandtIKK2012}
\bibfield{author}{\bibinfo{person}{Christina Brandt}, \bibinfo{person}{Nicole
  Immorlica}, \bibinfo{person}{Gautam Kamath}, {and} \bibinfo{person}{Robert
  Kleinberg}.} \bibinfo{year}{2012}\natexlab{}.
\newblock \showarticletitle{An Analysis of One-Dimensional {S}chelling
  Segregation}. In \bibinfo{booktitle}{\emph{\ifshortConferenceName{}Proc.
  {STOC}~'12\else{}Proceedings of the 44th Symposium on Theory of Computing
  Conference, {STOC}~'12\fi{}}} (New York, NY, USA),
  \bibfield{editor}{\bibinfo{person}{Howard~J. Karloff} {and}
  \bibinfo{person}{Toniann Pitassi}} (Eds.). \bibinfo{publisher}{{ACM}},
  \bibinfo{address}{New York, NY, USA}, \bibinfo{pages}{789--804}.
\newblock
\urldef\tempurl%
\url{https://doi.org/10.1145/2213977.2214048}
\showDOI{\tempurl}


\bibitem[Bredereck et~al\mbox{.}(2019)]%
        {BredereckEI2019}
\bibfield{author}{\bibinfo{person}{Robert Bredereck}, \bibinfo{person}{Edith
  Elkind}, {and} \bibinfo{person}{Ayumi Igarashi}.}
  \bibinfo{year}{2019}\natexlab{}.
\newblock \showarticletitle{Hedonic Diversity Games}. In
  \bibinfo{booktitle}{\emph{\ifshortConferenceName{}Proc.
  {AAMAS}~'19\else{}Proceedings of the 18th International Conference on
  Autonomous Agents and Multiagent Systems, {AAMAS}~'19\fi{}}} (Montreal, QC,
  Canada), \bibfield{editor}{\bibinfo{person}{Edith Elkind},
  \bibinfo{person}{Manuela Veloso}, \bibinfo{person}{Noa Agmon}, {and}
  \bibinfo{person}{Matthew~E. Taylor}} (Eds.). \bibinfo{publisher}{IFAAMAS},
  \bibinfo{address}{Richland, SC, USA}, \bibinfo{pages}{565--573}.
\newblock
\urldef\tempurl%
\url{https://dl.acm.org/doi/10.5555/3306127.3331741}
\showURL{%
\tempurl}


\bibitem[Bullinger et~al\mbox{.}(2021)]%
        {BullingerSV2021}
\bibfield{author}{\bibinfo{person}{Martin Bullinger}, \bibinfo{person}{Warut
  Suksompong}, {and} \bibinfo{person}{Alexandros~A. Voudouris}.}
  \bibinfo{year}{2021}\natexlab{}.
\newblock \showarticletitle{Welfare Guarantees in {S}chelling Segregation}.
\newblock \bibinfo{journal}{\emph{\ifshortJournalName{}J. Artif. Intell.
  Res.\else{}Journal of Artificial Intelligence Research\fi{}}}
  \bibinfo{volume}{71} (\bibinfo{year}{2021}), \bibinfo{pages}{143--174}.
\newblock
\urldef\tempurl%
\url{https://doi.org/10.1613/jair.1.12771}
\showDOI{\tempurl}


\bibitem[Chauhan et~al\mbox{.}(2018)]%
        {ChauhanLM2018}
\bibfield{author}{\bibinfo{person}{Ankit Chauhan}, \bibinfo{person}{Pascal
  Lenzner}, {and} \bibinfo{person}{Louise Molitor}.}
  \bibinfo{year}{2018}\natexlab{}.
\newblock \showarticletitle{Schelling Segregation With Strategic Agents}. In
  \bibinfo{booktitle}{\emph{\ifshortConferenceName{}Proc.
  {SAGT}~'18\else{}Proceedings of the 11th International Symposium on
  Algorithmic Game Theory, {SAGT}~'18\fi{}}} (Beijing, China)
  \emph{(\bibinfo{series}{\ifshortConferenceName{}LNCS\else{}Lecture Notes in
  Computer Science\fi{}}, Vol.~\bibinfo{volume}{11059})},
  \bibfield{editor}{\bibinfo{person}{Xiaotie Deng}} (Ed.).
  \bibinfo{publisher}{Springer}, \bibinfo{address}{Cham},
  \bibinfo{pages}{137--149}.
\newblock
\urldef\tempurl%
\url{https://doi.org/10.1007/978-3-319-99660-8_13}
\showDOI{\tempurl}


\bibitem[de~La~Haye et~al\mbox{.}(2026)]%
        {continuousHLSW26}
\bibfield{author}{\bibinfo{person}{Merlin de La~Haye}, \bibinfo{person}{Pascal
  Lenzner}, \bibinfo{person}{Farehe Soheil}, {and} \bibinfo{person}{Marcus
  Wunderlich}.} \bibinfo{year}{2026}\natexlab{}.
\newblock \showarticletitle{Metric Hedonic Games on the Line}. In
  \bibinfo{booktitle}{\emph{\ifshortConferenceName{}Proc.
  {AAMAS}~'26\else{}Proceedings of the 25th International Conference on
  Autonomous Agents and Multiagent Systems, {AAMAS}~'26\fi{}}} (Paphos,
  Cyprus), \bibfield{editor}{\bibinfo{person}{Chris Amato},
  \bibinfo{person}{Louise Dennis}, \bibinfo{person}{Viviana Mascardi}, {and}
  \bibinfo{person}{John Thangarajah}} (Eds.). \bibinfo{publisher}{IFAAMAS},
  \bibinfo{address}{Richland, SC, USA}, \bibinfo{pages}{2214--2222}.
\newblock
\urldef\tempurl%
\url{https://doi.org/10.65109/CJCT2898}
\showDOI{\tempurl}


\bibitem[Deligkas et~al\mbox{.}(2024)]%
        {DeligkasEG2024}
\bibfield{author}{\bibinfo{person}{Argyrios Deligkas}, \bibinfo{person}{Eduard
  Eiben}, {and} \bibinfo{person}{Tiger{-}Lily Goldsmith}.}
  \bibinfo{year}{2024}\natexlab{}.
\newblock \showarticletitle{The Parameterized Complexity of Welfare Guarantees
  in {S}chelling Segregation}.
\newblock \bibinfo{journal}{\emph{\ifshortJournalName{}Theor. Comput.
  Sci.\else{}Theoretical Computer Science\fi{}}}  \bibinfo{volume}{1017}
  (\bibinfo{year}{2024}), \bibinfo{pages}{114783}.
\newblock
\urldef\tempurl%
\url{https://doi.org/10.1016/j.tcs.2024.114783}
\showDOI{\tempurl}


\bibitem[Doerr et~al\mbox{.}(2012)]%
        {DoerrJW12}
\bibfield{author}{\bibinfo{person}{Benjamin Doerr}, \bibinfo{person}{Daniel
  Johannsen}, {and} \bibinfo{person}{Carola Winzen}.}
  \bibinfo{year}{2012}\natexlab{}.
\newblock \showarticletitle{Multiplicative Drift Analysis}.
\newblock
  \bibinfo{journal}{\emph{\ifshortJournalName{}Algorithmica\else{}Algorithmica\fi{}}}
  \bibinfo{volume}{64}, \bibinfo{number}{4} (\bibinfo{year}{2012}),
  \bibinfo{pages}{673--697}.
\newblock
\urldef\tempurl%
\url{https://doi.org/10.1007/s00453-012-9622-x}
\showDOI{\tempurl}


\bibitem[Dreze and Greenberg(1980)]%
        {DrezeG1980}
\bibfield{author}{\bibinfo{person}{Jacques~H. Dreze} {and}
  \bibinfo{person}{Joseph Greenberg}.} \bibinfo{year}{1980}\natexlab{}.
\newblock \showarticletitle{Hedonic Coalitions: {O}ptimality and Stability}.
\newblock
  \bibinfo{journal}{\emph{\ifshortJournalName{}Econometrica\else{}Econometrica\fi{}}}
  \bibinfo{volume}{48}, \bibinfo{number}{4} (\bibinfo{year}{1980}),
  \bibinfo{pages}{987--1003}.
\newblock
\urldef\tempurl%
\url{https://doi.org/10.2307/1912943}
\showDOI{\tempurl}


\bibitem[Echzell et~al\mbox{.}(2019)]%
        {Echzell0LMPSSS2019}
\bibfield{author}{\bibinfo{person}{Hagen Echzell}, \bibinfo{person}{Tobias
  Friedrich}, \bibinfo{person}{Pascal Lenzner}, \bibinfo{person}{Louise
  Molitor}, \bibinfo{person}{Marcus Pappik}, \bibinfo{person}{Friedrich
  Sch{\"{o}}ne}, \bibinfo{person}{Fabian Sommer}, {and} \bibinfo{person}{David
  Stangl}.} \bibinfo{year}{2019}\natexlab{}.
\newblock \showarticletitle{Convergence and Hardness of Strategic {S}chelling
  Segregation}. In \bibinfo{booktitle}{\emph{\ifshortConferenceName{}Proc.
  WINE~'19\else{}Proceedings of the 15th International Conference on Web and
  Internet Economics, {WINE}~'19\fi{}}} (New York, NY, USA)
  \emph{(\bibinfo{series}{\ifshortConferenceName{}LNCS\else{}Lecture Notes in
  Computer Science\fi{}}, Vol.~\bibinfo{volume}{11920})},
  \bibfield{editor}{\bibinfo{person}{Ioannis Caragiannis},
  \bibinfo{person}{Vahab~S. Mirrokni}, {and} \bibinfo{person}{Evdokia
  Nikolova}} (Eds.). \bibinfo{publisher}{Springer}, \bibinfo{address}{Cham},
  \bibinfo{pages}{156--170}.
\newblock
\urldef\tempurl%
\url{https://doi.org/10.1007/978-3-030-35389-6_12}
\showDOI{\tempurl}


\bibitem[Even{-}Dar et~al\mbox{.}(2007)]%
        {EvenDarKM2007}
\bibfield{author}{\bibinfo{person}{Eyal Even{-}Dar}, \bibinfo{person}{Alexander
  Kesselman}, {and} \bibinfo{person}{Yishay Mansour}.}
  \bibinfo{year}{2007}\natexlab{}.
\newblock \showarticletitle{Convergence Time to {N}ash Equilibrium in Load
  Balancing}.
\newblock \bibinfo{journal}{\emph{\ifshortJournalName{}{ACM} Trans.
  Algorithms\else{}{ACM} Transactions on Algorithms\fi{}}} \bibinfo{volume}{3},
  \bibinfo{number}{3} (\bibinfo{year}{2007}), \bibinfo{pages}{32}.
\newblock
\urldef\tempurl%
\url{https://doi.org/10.1145/1273340.1273348}
\showDOI{\tempurl}


\bibitem[Fabrikant et~al\mbox{.}(2004)]%
        {FabrikantPT2004}
\bibfield{author}{\bibinfo{person}{Alex Fabrikant}, \bibinfo{person}{Christos
  Papadimitriou}, {and} \bibinfo{person}{Kunal Talwar}.}
  \bibinfo{year}{2004}\natexlab{}.
\newblock \showarticletitle{The Complexity of Pure {N}ash Equilibria}. In
  \bibinfo{booktitle}{\emph{\ifshortConferenceName{}Proc.
  {STOC}~'04\else{}Proceedings of the 36th Annual {ACM} Symposium on Theory of
  Computing, {STOC}~'04\fi{}}} (Chicago, IL, USA),
  \bibfield{editor}{\bibinfo{person}{László Babai}} (Ed.).
  \bibinfo{publisher}{ACM}, \bibinfo{address}{New York, NY, USA},
  \bibinfo{pages}{604--612}.
\newblock
\urldef\tempurl%
\url{https://doi.org/10.1145/1007352.1007445}
\showDOI{\tempurl}


\bibitem[Feldotto et~al\mbox{.}(2019)]%
        {FeldottoLMS2019}
\bibfield{author}{\bibinfo{person}{Matthias Feldotto}, \bibinfo{person}{Pascal
  Lenzner}, \bibinfo{person}{Louise Molitor}, {and} \bibinfo{person}{Alexander
  Skopalik}.} \bibinfo{year}{2019}\natexlab{}.
\newblock \showarticletitle{From Hotelling to Load Balancing: Approximation and
  the Principle of Minimum Differentiation}. In
  \bibinfo{booktitle}{\emph{\ifshortConferenceName{}Proc.
  {AAMAS}~'19\else{}Proceedings of the 18th International Conference on
  Autonomous Agents and Multiagent Systems, {AAMAS}~'19\fi{}}} (Montreal, QC,
  Canada), \bibfield{editor}{\bibinfo{person}{Edith Elkind},
  \bibinfo{person}{Manuela Veloso}, \bibinfo{person}{Noa Agmon}, {and}
  \bibinfo{person}{Matthew~E. Taylor}} (Eds.). \bibinfo{publisher}{IFAAMAS},
  \bibinfo{address}{Richland, SC, USA}, \bibinfo{pages}{1949--1951}.
\newblock
\urldef\tempurl%
\url{https://dl.acm.org/doi/10.5555/3306127.3331973}
\showURL{%
\tempurl}


\bibitem[Fredman and Tarjan(1987)]%
        {FredmanT1987}
\bibfield{author}{\bibinfo{person}{Michael~L. Fredman} {and}
  \bibinfo{person}{Robert~Endre Tarjan}.} \bibinfo{year}{1987}\natexlab{}.
\newblock \showarticletitle{Fibonacci Heaps and Their Uses in Improved Network
  Optimization Algorithms}.
\newblock \bibinfo{journal}{\emph{\ifshortJournalName{}J.~{ACM}\else{}Journal
  of the {ACM}\fi{}}} \bibinfo{volume}{34}, \bibinfo{number}{3}
  (\bibinfo{year}{1987}), \bibinfo{pages}{596--615}.
\newblock
\urldef\tempurl%
\url{https://doi.org/10.1145/28869.28874}
\showDOI{\tempurl}


\bibitem[Friedrich et~al\mbox{.}(2023)]%
        {singlepeakedFriedrichLMS23}
\bibfield{author}{\bibinfo{person}{Tobias Friedrich}, \bibinfo{person}{Pascal
  Lenzner}, \bibinfo{person}{Louise Molitor}, {and} \bibinfo{person}{Lars
  Seifert}.} \bibinfo{year}{2023}\natexlab{}.
\newblock \showarticletitle{Single-Peaked Jump Schelling Games}. In
  \bibinfo{booktitle}{\emph{\ifshortConferenceName{}Proc.
  {SAGT}~'23\else{}Proceedings of the 16th International Symposium on
  Algorithmic Game Theory, {SAGT}~'23\fi{}}} (Egham, United Kingdom)
  \emph{(\bibinfo{series}{\ifshortConferenceName{}LNCS\else{}Lecture Notes in
  Computer Science\fi{}}, Vol.~\bibinfo{volume}{14238})},
  \bibfield{editor}{\bibinfo{person}{Argyrios Deligkas} {and}
  \bibinfo{person}{Aris Filos{-}Ratsikas}} (Eds.).
  \bibinfo{publisher}{Springer}, \bibinfo{address}{Cham},
  \bibinfo{pages}{111--126}.
\newblock
\urldef\tempurl%
\url{https://doi.org/10.1007/978-3-031-43254-5_7}
\showDOI{\tempurl}


\bibitem[{Gadea Harder} et~al\mbox{.}(2023)]%
        {GadeaHarderKLS2023}
\bibfield{author}{\bibinfo{person}{Jonathan {Gadea Harder}},
  \bibinfo{person}{Simon Krogmann}, \bibinfo{person}{Pascal Lenzner}, {and}
  \bibinfo{person}{Alexander Skopalik}.} \bibinfo{year}{2023}\natexlab{}.
\newblock \showarticletitle{Strategic Resource Selection With Homophilic
  Agents}. In \bibinfo{booktitle}{\emph{\ifshortConferenceName{}Proc.
  {IJCAI}~'23\else{}Proceedings of the 32nd International Joint Conference on
  Artificial Intelligence, {IJCAI}~'23\fi{}}} (Macao, SAR, China),
  \bibfield{editor}{\bibinfo{person}{Edith Elkind}} (Ed.).
  \bibinfo{publisher}{ijcai.org}, \bibinfo{pages}{2701--2709}.
\newblock
\urldef\tempurl%
\url{https://doi.org/10.24963/ijcai.2023/301}
\showDOI{\tempurl}


\bibitem[Ganian et~al\mbox{.}(2023)]%
        {GanianHKSS2023}
\bibfield{author}{\bibinfo{person}{Robert Ganian}, \bibinfo{person}{Thekla
  Hamm}, \bibinfo{person}{Dušan Knop}, \bibinfo{person}{Šimon Schierreich},
  {and} \bibinfo{person}{Ondřej Such{\'{y}}}.}
  \bibinfo{year}{2023}\natexlab{}.
\newblock \showarticletitle{Hedonic Diversity Games: {A} Complexity Picture
  With More Than Two Colors}.
\newblock \bibinfo{journal}{\emph{\ifshortJournalName{}Artif.
  Intell.\else{}Artificial Intelligence\fi{}}}  \bibinfo{volume}{325}
  (\bibinfo{year}{2023}), \bibinfo{pages}{104017}.
\newblock
\urldef\tempurl%
\url{https://doi.org/10.1016/j.artint.2023.104017}
\showDOI{\tempurl}


\bibitem[Hajek(1982)]%
        {Hajek1982}
\bibfield{author}{\bibinfo{person}{Bruce Hajek}.}
  \bibinfo{year}{1982}\natexlab{}.
\newblock \showarticletitle{Hitting-Time and Occupation-Time Bounds Implied by
  Drift Analysis With Applications}.
\newblock \bibinfo{journal}{\emph{\ifshortJournalName{}Adv. Appl.
  Probab.\else{}Advances in Applied Probability\fi{}}} \bibinfo{volume}{14},
  \bibinfo{number}{3} (\bibinfo{year}{1982}), \bibinfo{pages}{502--525}.
\newblock
\urldef\tempurl%
\url{https://doi.org/10.2307/1426671}
\showDOI{\tempurl}


\bibitem[Harks et~al\mbox{.}(2011)]%
        {HarksKM2011}
\bibfield{author}{\bibinfo{person}{Tobias Harks}, \bibinfo{person}{Max Klimm},
  {and} \bibinfo{person}{Rolf~H. M{\"{o}}hring}.}
  \bibinfo{year}{2011}\natexlab{}.
\newblock \showarticletitle{Characterizing the Existence of Potential Functions
  in Weighted Congestion Games}.
\newblock \bibinfo{journal}{\emph{\ifshortJournalName{}Theory Comput.
  Syst.\else{}Theory of Computing Systems\fi{}}} \bibinfo{volume}{49},
  \bibinfo{number}{1} (\bibinfo{year}{2011}), \bibinfo{pages}{46--70}.
\newblock
\urldef\tempurl%
\url{https://doi.org/10.1007/S00224-011-9315-X}
\showDOI{\tempurl}


\bibitem[He and Yao(2001)]%
        {HeY2001}
\bibfield{author}{\bibinfo{person}{Jun He} {and} \bibinfo{person}{Xin Yao}.}
  \bibinfo{year}{2001}\natexlab{}.
\newblock \showarticletitle{Drift Analysis and Average Time Complexity of
  Evolutionary Algorithms}.
\newblock \bibinfo{journal}{\emph{\ifshortJournalName{}Artif.
  Intell.\else{}Artificial Intelligence\fi{}}} \bibinfo{volume}{127},
  \bibinfo{number}{1} (\bibinfo{year}{2001}), \bibinfo{pages}{57--85}.
\newblock
\urldef\tempurl%
\url{https://doi.org/10.1016/s0004-3702(01)00058-3}
\showDOI{\tempurl}


\bibitem[Kanellopoulos et~al\mbox{.}(2023)]%
        {KanellopoulosKV2023}
\bibfield{author}{\bibinfo{person}{Panagiotis Kanellopoulos},
  \bibinfo{person}{Maria Kyropoulou}, {and} \bibinfo{person}{Alexandros~A.
  Voudouris}.} \bibinfo{year}{2023}\natexlab{}.
\newblock \showarticletitle{Not All Strangers Are the Same: The Impact of
  Tolerance in {S}chelling Games}.
\newblock \bibinfo{journal}{\emph{\ifshortJournalName{}Theor. Comput.
  Sci.\else{}Theoretical Computer Science\fi{}}}  \bibinfo{volume}{971}
  (\bibinfo{year}{2023}), \bibinfo{pages}{114065}.
\newblock
\urldef\tempurl%
\url{https://doi.org/10.1016/j.tcs.2023.114065}
\showDOI{\tempurl}


\bibitem[Knop and Šimon Schierreich(2023)]%
        {KnopS2023}
\bibfield{author}{\bibinfo{person}{Dušan Knop} {and} \bibinfo{person}{Šimon
  Schierreich}.} \bibinfo{year}{2023}\natexlab{}.
\newblock \showarticletitle{Host Community Respecting Refugee Housing}. In
  \bibinfo{booktitle}{\emph{\ifshortConferenceName{}Proc.
  {AAMAS}~'23\else{}Proceedings of the 22nd International Conference on
  Autonomous Agents and Multiagent Systems, {AAMAS}~'23\fi{}}} (London, United
  Kingdom), \bibfield{editor}{\bibinfo{person}{Noa Agmon},
  \bibinfo{person}{Bo~An}, \bibinfo{person}{Alessandro Ricci}, {and}
  \bibinfo{person}{William Yeoh}} (Eds.). \bibinfo{publisher}{IFAAMAS},
  \bibinfo{address}{Richland, SC, USA}, \bibinfo{pages}{966--975}.
\newblock
\urldef\tempurl%
\url{https://dl.acm.org/doi/10.5555/3545946.3598736}
\showURL{%
\tempurl}


\bibitem[Kreisel et~al\mbox{.}(2024)]%
        {KreiselBFN2024}
\bibfield{author}{\bibinfo{person}{Luca Kreisel}, \bibinfo{person}{Niclas
  Boehmer}, \bibinfo{person}{Vincent Froese}, {and} \bibinfo{person}{Rolf
  Niedermeier}.} \bibinfo{year}{2024}\natexlab{}.
\newblock \showarticletitle{Equilibria in {S}chelling Games: {C}omputational
  Hardness and Robustness}.
\newblock \bibinfo{journal}{\emph{\ifshortJournalName{}Auton. Agents Multi
  Agent Syst.\else{}Autonomous Agents and Multi-Agent Systems\fi{}}}
  \bibinfo{volume}{38}, \bibinfo{number}{1} (\bibinfo{year}{2024}),
  \bibinfo{pages}{9}.
\newblock
\urldef\tempurl%
\url{https://doi.org/10.1007/s10458-023-09632-7}
\showDOI{\tempurl}


\bibitem[Krogmann et~al\mbox{.}(2021)]%
        {twostageKrogmannLMS21}
\bibfield{author}{\bibinfo{person}{Simon Krogmann}, \bibinfo{person}{Pascal
  Lenzner}, \bibinfo{person}{Louise Molitor}, {and} \bibinfo{person}{Alexander
  Skopalik}.} \bibinfo{year}{2021}\natexlab{}.
\newblock \showarticletitle{Two-Stage Facility Location Games With Strategic
  Clients and Facilities}. In
  \bibinfo{booktitle}{\emph{\ifshortConferenceName{}Proc.
  {IJCAI}~'21\else{}Proceedings of the 30th International Joint Conference on
  Artificial Intelligence, {IJCAI}~'21\fi{}}},
  \bibfield{editor}{\bibinfo{person}{Zhi{-}Hua Zhou}} (Ed.).
  \bibinfo{publisher}{ijcai.org}, \bibinfo{pages}{292--298}.
\newblock
\urldef\tempurl%
\url{https://doi.org/10.24963/IJCAI.2021/41}
\showDOI{\tempurl}


\bibitem[Krogmann et~al\mbox{.}(2023)]%
        {twostageKrogmannLS23}
\bibfield{author}{\bibinfo{person}{Simon Krogmann}, \bibinfo{person}{Pascal
  Lenzner}, {and} \bibinfo{person}{Alexander Skopalik}.}
  \bibinfo{year}{2023}\natexlab{}.
\newblock \showarticletitle{Strategic Facility Location With Clients That
  Minimize Total Waiting Time}. In
  \bibinfo{booktitle}{\emph{\ifshortConferenceName{}Proc.
  {AAAI}~'23\else{}Proceedings of the 37th {AAAI} Conference on Artificial
  Intelligence, {AAAI}~'23\fi{}}} (Washington, DC, USA),
  \bibfield{editor}{\bibinfo{person}{Brian Williams}, \bibinfo{person}{Yiling
  Chen}, {and} \bibinfo{person}{Jennifer Neville}} (Eds.).
  \bibinfo{publisher}{AAAI Press}, \bibinfo{address}{Washington, DC, USA},
  \bibinfo{pages}{5714--5721}.
\newblock
\urldef\tempurl%
\url{https://doi.org/10.1609/aaai.v37i5.25709}
\showDOI{\tempurl}


\bibitem[Krogmann et~al\mbox{.}(2025)]%
        {KrogmannLS2025}
\bibfield{author}{\bibinfo{person}{Simon Krogmann}, \bibinfo{person}{Pascal
  Lenzner}, {and} \bibinfo{person}{Alexander Skopalik}.}
  \bibinfo{year}{2025}\natexlab{}.
\newblock \showarticletitle{The Bakers and Millers Game With Restricted
  Locations}. In \bibinfo{booktitle}{\emph{\ifshortConferenceName{}Proc.
  {AAMAS}~'25\else{}Proceedings of the 24th International Conference on
  Autonomous Agents and Multiagent Systems, {AAMAS}~'25\fi{}}} (Detroit, MI,
  USA), \bibfield{editor}{\bibinfo{person}{Sanmay Das}, \bibinfo{person}{Ann
  Now{\'{e}}}, {and} \bibinfo{person}{Yevgeniy Vorobeychik}} (Eds.).
  \bibinfo{publisher}{IFAAMAS}, \bibinfo{address}{Richland, SC, USA},
  \bibinfo{pages}{1209--1217}.
\newblock
\urldef\tempurl%
\url{https://dl.acm.org/doi/10.5555/3709347.3743752}
\showURL{%
\tempurl}


\bibitem[Krogmann et~al\mbox{.}(2024)]%
        {twostageKrogmannLSUV24}
\bibfield{author}{\bibinfo{person}{Simon Krogmann}, \bibinfo{person}{Pascal
  Lenzner}, \bibinfo{person}{Alexander Skopalik}, \bibinfo{person}{Marc Uetz},
  {and} \bibinfo{person}{Marnix~C. Vos}.} \bibinfo{year}{2024}\natexlab{}.
\newblock \showarticletitle{Equilibria in Two-Stage Facility Location With
  Atomic Clients}. In \bibinfo{booktitle}{\emph{\ifshortConferenceName{}Proc.
  {IJCAI}~'24\else{}Proceedings of the 33rd International Joint Conference on
  Artificial Intelligence, {IJCAI}~'24\fi{}}} (Jeju, South Korea),
  \bibfield{editor}{\bibinfo{person}{Kate Larson}} (Ed.).
  \bibinfo{publisher}{ijcai.org}, \bibinfo{pages}{2842--2850}.
\newblock
\urldef\tempurl%
\url{https://doi.org/10.24963/ijcai.2024/315}
\showDOI{\tempurl}


\bibitem[Lengler(2020)]%
        {Lengler2020}
\bibfield{author}{\bibinfo{person}{Johannes Lengler}.}
  \bibinfo{year}{2020}\natexlab{}.
\newblock \showarticletitle{Drift Analysis}.
\newblock In \bibinfo{booktitle}{\emph{Theory of Evolutionary Computation:
  {R}ecent Developments in Discrete Optimization}},
  \bibfield{editor}{\bibinfo{person}{Benjamin Doerr} {and}
  \bibinfo{person}{Frank Neumann}} (Eds.). \bibinfo{publisher}{Springer},
  \bibinfo{address}{Cham}.
\newblock
\urldef\tempurl%
\url{https://doi.org/10.1007/978-3-030-29414-4}
\showDOI{\tempurl}


\bibitem[Milchtaich(1996)]%
        {MILCHTAICH1996111}
\bibfield{author}{\bibinfo{person}{Igal Milchtaich}.}
  \bibinfo{year}{1996}\natexlab{}.
\newblock \showarticletitle{Congestion Games With Player-Specific Payoff
  Functions}.
\newblock \bibinfo{journal}{\emph{\ifshortJournalName{}Games Econ.
  Behav.\else{}Games and Economic Behavior\fi{}}} \bibinfo{volume}{13},
  \bibinfo{number}{1} (\bibinfo{year}{1996}), \bibinfo{pages}{111--124}.
\newblock
\urldef\tempurl%
\url{https://doi.org/10.1006/game.1996.0027}
\showDOI{\tempurl}


\bibitem[Monderer and Shapley(1996)]%
        {MondererS1996}
\bibfield{author}{\bibinfo{person}{Dov Monderer} {and}
  \bibinfo{person}{Lloyd~S. Shapley}.} \bibinfo{year}{1996}\natexlab{}.
\newblock \showarticletitle{Potential Games}.
\newblock \bibinfo{journal}{\emph{\ifshortJournalName{}Games Econ.
  Behav.\else{}Games and Economic Behavior\fi{}}} \bibinfo{volume}{14},
  \bibinfo{number}{1} (\bibinfo{year}{1996}), \bibinfo{pages}{124--143}.
\newblock
\urldef\tempurl%
\url{https://doi.org/10.1006/game.1996.0044}
\showDOI{\tempurl}


\bibitem[Narayanan et~al\mbox{.}(2025a)]%
        {NarayananOTV2025}
\bibfield{author}{\bibinfo{person}{Lata Narayanan}, \bibinfo{person}{Jaroslav
  Opatrny}, \bibinfo{person}{Shanmukha Tummala}, {and}
  \bibinfo{person}{Alexandros~A. Voudouris}.} \bibinfo{year}{2025}\natexlab{a}.
\newblock \showarticletitle{Variety-Seeking Jump Games on Graphs}. In
  \bibinfo{booktitle}{\emph{\ifshortConferenceName{}Proc.
  {IJCAI}~'25\else{}Proceedings of the 34th International Joint Conference on
  Artificial Intelligence, {IJCAI}~'25\fi}} (Montreal, QC, Canada),
  \bibfield{editor}{\bibinfo{person}{James Kwok}} (Ed.).
  \bibinfo{publisher}{ijcai.org}, \bibinfo{pages}{4005--4013}.
\newblock
\urldef\tempurl%
\url{https://doi.org/10.24963/ijcai.2025/446}
\showDOI{\tempurl}


\bibitem[Narayanan et~al\mbox{.}(2025b)]%
        {NarayananSV2025}
\bibfield{author}{\bibinfo{person}{Lata Narayanan}, \bibinfo{person}{Yasaman
  Sabbagh}, {and} \bibinfo{person}{Alexandros~A. Voudouris}.}
  \bibinfo{year}{2025}\natexlab{b}.
\newblock \showarticletitle{Diversity-Seeking Jump Games in Networks}.
\newblock \bibinfo{journal}{\emph{\ifshortJournalName{}Auton. Agents Multi
  Agent Syst.\else{}Autonomous Agents and Multi-Agent Systems\fi{}}}
  \bibinfo{volume}{39}, \bibinfo{number}{2} (\bibinfo{year}{2025}),
  \bibinfo{pages}{32}.
\newblock
\urldef\tempurl%
\url{https://doi.org/10.1007/s10458-025-09714-8}
\showDOI{\tempurl}


\bibitem[Nash~Jr(1950)]%
        {nash1950equilibrium}
\bibfield{author}{\bibinfo{person}{John~F. Nash~Jr}.}
  \bibinfo{year}{1950}\natexlab{}.
\newblock \showarticletitle{Equilibrium Points in $N$-Person Games}.
\newblock \bibinfo{journal}{\emph{PNAS}} \bibinfo{volume}{36},
  \bibinfo{number}{1} (\bibinfo{year}{1950}), \bibinfo{pages}{48--49}.
\newblock


\bibitem[Rosenthal(1973)]%
        {Rosenthal1973}
\bibfield{author}{\bibinfo{person}{Robert~W. Rosenthal}.}
  \bibinfo{year}{1973}\natexlab{}.
\newblock \showarticletitle{A Class of Games Possessing Pure-Strategy {N}ash
  Equilibria}.
\newblock \bibinfo{journal}{\emph{\ifshortJournalName{}Int. J. Game
  Theory\else{}International Journal of Game Theory\fi{}}}  \bibinfo{volume}{2}
  (\bibinfo{year}{1973}), \bibinfo{pages}{65--67}.
\newblock
\urldef\tempurl%
\url{https://doi.org/10.1007/BF01737559}
\showDOI{\tempurl}


\bibitem[Roughgarden and Tardos(2002)]%
        {RoughgardenT2002}
\bibfield{author}{\bibinfo{person}{Tim Roughgarden} {and}
  \bibinfo{person}{{\'{E}}va Tardos}.} \bibinfo{year}{2002}\natexlab{}.
\newblock \showarticletitle{How Bad Is Selfish Routing?}
\newblock \bibinfo{journal}{\emph{\ifshortJournalName{}J.~{ACM}\else{}Journal
  of the {ACM}\fi{}}} \bibinfo{volume}{49}, \bibinfo{number}{2}
  (\bibinfo{year}{2002}), \bibinfo{pages}{236--259}.
\newblock
\urldef\tempurl%
\url{https://doi.org/10.1145/506147.506153}
\showDOI{\tempurl}


\bibitem[Schelling(1969)]%
        {Schelling1969}
\bibfield{author}{\bibinfo{person}{Thomas~C. Schelling}.}
  \bibinfo{year}{1969}\natexlab{}.
\newblock \showarticletitle{Models of Segregation}.
\newblock \bibinfo{journal}{\emph{The American Economic Review}}
  \bibinfo{volume}{59}, \bibinfo{number}{2} (\bibinfo{year}{1969}),
  \bibinfo{pages}{488--493}.
\newblock
\urldef\tempurl%
\url{http://www.jstor.org/stable/1823701}
\showURL{%
\tempurl}


\bibitem[Schelling(1971)]%
        {Schelling1971}
\bibfield{author}{\bibinfo{person}{Thomas~C. Schelling}.}
  \bibinfo{year}{1971}\natexlab{}.
\newblock \showarticletitle{Dynamic Models of Segregation}.
\newblock \bibinfo{journal}{\emph{The Journal of Mathematical Sociology}}
  \bibinfo{volume}{1}, \bibinfo{number}{2} (\bibinfo{year}{1971}),
  \bibinfo{pages}{143--186}.
\newblock
\urldef\tempurl%
\url{https://doi.org/10.1080/0022250X.1971.9989794}
\showDOI{\tempurl}


\bibitem[Schierreich(2023)]%
        {Schierreich2023}
\bibfield{author}{\bibinfo{person}{{\v{S}}imon Schierreich}.}
  \bibinfo{year}{2023}\natexlab{}.
\newblock \showarticletitle{Anonymous Refugee Housing With Upper-Bounds}.
\newblock \bibinfo{journal}{\emph{CoRR}}  \bibinfo{volume}{abs/2308.09501}
  (\bibinfo{year}{2023}).
\newblock
\urldef\tempurl%
\url{https://doi.org/10.48550/arxiv.2308.09501}
\showDOI{\tempurl}
\showeprint[arXiv]{2308.09501}


\bibitem[Schierreich(2024)]%
        {Schierreich2024}
\bibfield{author}{\bibinfo{person}{{\v{S}}imon Schierreich}.}
  \bibinfo{year}{2024}\natexlab{}.
\newblock \showarticletitle{Two-Stage Refugee Resettlement Models:
  {C}omputational Aspects of the Second Stage}. In
  \bibinfo{booktitle}{\emph{\ifshortConferenceName{}Proc.
  {AIES}~'24\else{}Proceedings of the 7th {AAAI/ACM} Conference on AI, Ethics,
  and Society, {AIES}~'24\fi{}}} (San Jose, CA, {USA}),
  \bibfield{editor}{\bibinfo{person}{Sanmay Das} {and}
  \bibinfo{person}{Brian~Patrick Green}} (Eds.). \bibinfo{publisher}{{AAAI}
  Press}, \bibinfo{address}{Washington, DC, USA}, \bibinfo{pages}{50--51}.
\newblock
\urldef\tempurl%
\url{https://doi.org/10.1609/aies.v7i2.31908}
\showDOI{\tempurl}


\bibitem[Sleator and Tarjan(1985)]%
        {SleatorT1985}
\bibfield{author}{\bibinfo{person}{Daniel~Dominic Sleator} {and}
  \bibinfo{person}{Robert~Endre Tarjan}.} \bibinfo{year}{1985}\natexlab{}.
\newblock \showarticletitle{Self-Adjusting Binary Search Trees}.
\newblock \bibinfo{journal}{\emph{\ifshortJournalName{}J.~{ACM}\else{}Journal
  of the {ACM}\fi{}}} \bibinfo{volume}{32}, \bibinfo{number}{3}
  (\bibinfo{year}{1985}), \bibinfo{pages}{652--686}.
\newblock
\urldef\tempurl%
\url{https://doi.org/10.1145/3828.3835}
\showDOI{\tempurl}


\bibitem[Tarjan(1975)]%
        {Tarjan1975}
\bibfield{author}{\bibinfo{person}{Robert~Endre Tarjan}.}
  \bibinfo{year}{1975}\natexlab{}.
\newblock \showarticletitle{Efficiency of a Good but Not Linear Set Union
  Algorithm}.
\newblock \bibinfo{journal}{\emph{\ifshortJournalName{}J.~{ACM}\else{}Journal
  of the {ACM}\fi{}}} \bibinfo{volume}{22}, \bibinfo{number}{2}
  (\bibinfo{year}{1975}), \bibinfo{pages}{215--225}.
\newblock
\urldef\tempurl%
\url{https://doi.org/10.1145/321879.321884}
\showDOI{\tempurl}


\bibitem[Tarjan(1985)]%
        {Tarjan1985}
\bibfield{author}{\bibinfo{person}{Robert~Endre Tarjan}.}
  \bibinfo{year}{1985}\natexlab{}.
\newblock \showarticletitle{Amortized Computational Complexity}.
\newblock \bibinfo{journal}{\emph{\ifshortJournalName{}SIAM J. Algebr. Discrete
  Methods\else{}SIAM Journal on Algebraic and Discrete Methods\fi{}}}
  \bibinfo{volume}{6}, \bibinfo{number}{2} (\bibinfo{year}{1985}),
  \bibinfo{pages}{306--318}.
\newblock
\urldef\tempurl%
\url{https://doi.org/10.1137/0606031}
\showDOI{\tempurl}


\bibitem[Witt(2013)]%
        {Witt13}
\bibfield{author}{\bibinfo{person}{Carsten Witt}.}
  \bibinfo{year}{2013}\natexlab{}.
\newblock \showarticletitle{Tight Bounds on the Optimization Time of a
  Randomized Search Heuristic on Linear Functions}.
\newblock \bibinfo{journal}{\emph{Combinatorics, Probability \& Computing}}
  \bibinfo{volume}{22}, \bibinfo{number}{2} (\bibinfo{year}{2013}),
  \bibinfo{pages}{294--318}.
\newblock
\urldef\tempurl%
\url{https://doi.org/10.1017/s0963548312000600}
\showDOI{\tempurl}


\end{thebibliography}

\end{document}